\documentclass[12pt,a4paper]{article}
\usepackage[english]{babel}
\usepackage{graphicx}
\usepackage{booktabs}
\usepackage{hyperref}
\hypersetup{colorlinks,urlcolor=blue,linkcolor=blue,citecolor=blue}
\usepackage{dcolumn}
\newcolumntype{d}[1]{D{.}{.}{#1}}
\usepackage[round]{natbib}
\usepackage{bbm}
\usepackage{multirow}
\usepackage{authblk}
\usepackage{color}
\usepackage{rotating}
\usepackage{amsmath, amsthm, amssymb}
\usepackage[utf8]{inputenc}
\usepackage[left=2.5cm,right=2.5cm,top=2.5cm,bottom=2.5cm]{geometry}
\usepackage[doublespacing]{setspace}
\usepackage[flushleft]{threeparttable}
\usepackage{placeins}
\usepackage{appendix}
\usepackage{caption}
\usepackage{float} 
\usepackage{amssymb}
\usepackage{bm}
\usepackage{enumitem}

\newtheorem{theorem}{Theorem}[section]
\newtheorem{proposition}[theorem]{Proposition}
\newtheorem{corollary}{Corollary}[theorem]

\newtheorem{algorithm}[theorem]{Algorithm}

\newcommand*\rfrac[2]{{}^{#1}\!/_{#2}}

\title{\singlespacing \bf A Spatial Fay-Herriot Model when the Auxiliary Variables are based on Non-traditional Data}

\author[1]{Robin\ Markwitz}
\author[1]{Angelo\ Moretti}
\author[1]{Camilla\ Salvatore}
\affil[1]{Department of Methodology and Statistics, Utrecht University, P.O. Box 80140, 3508 TC Utrecht, The Netherlands}
\date{}

\begin{document}

\maketitle

\begin{abstract}
\singlespacing
\noindent
Small area estimation methods combine direct survey estimates with model-based predictions to produce reliable estimates of population quantities. When covariates are measured with error, as often occurs when auxiliary information is coming from big data sources, standard area-level approaches such as the traditional Fay–Herriot model can perform poorly if this measurement error is ignored, potentially yielding biased estimates. Building on the extension proposed by Ybarra and Lohr to account for measurement error in covariates, we further develop this modelling framework by incorporating spatial dependence between areas. We propose a spatial measurement error Fay–Herriot model that jointly accounts for covariate measurement error and spatial correlation, enabling information borrowing across relevant neighbouring areas while properly adjusting for the additional uncertainty introduced by the covariates. We derive the properties of the proposed model and its parameter estimators and outline a parametric bootstrap procedure for estimating the mean squared error of the resulting empirical best linear unbiased predictor. A simulation study examines model performance under a range of measurement error scenarios. The proposed approach is then applied to European Social Survey data, supplemented with big data auxiliary information, to estimate regional attitudes towards climate change in Spain.
\end{abstract}

\textit{\textbf{Keywords:}} small area estimation; measurement error; big data; data integration; indirect estimation.

\newpage
\section{Introduction}

In recent years, small area estimation (SAE) has attracted increasing attention, largely in response to growing practical and policy-oriented needs for reliable sub-national statistics. In this context, the term ``small area'' generally refers to a specific subpopulation or domain for which direct estimates derived solely from observations within that domain in the survey sample are unreliable or unavailable because the corresponding sample size is too limited or even equal to zero. Small areas may be defined in various ways depending on the objectives of the study. Examples include geographically defined units such as regions, states, provinces, counties, or municipalities. Alternatively, they may refer to demographic or socio-economic groups characterised by combinations of socio-economic variables, such as age, sex, ethnicity, or other characteristics \citep{Rao15}.

The demand for reliable information at such disaggregated levels has increased substantially in recent decades, particularly in contexts where policy interventions and resource allocation decisions depend on detailed knowledge of local conditions. For example, public administrations frequently require accurate and precise estimates for small areas to guide the distribution of funding, to design targeted social policies, or to monitor inequalities across regions and communities. Consequently, the development of statistical methodologies capable of producing reliable estimates for small domains has become an important area of research within official statistics and applied statistical science \citep{pratesi2016poverty}.

The rapid development of digital technologies has led to the generation of large volumes of information about human activities. In addition to data produced through traditional administrative systems, individuals leave digital traces through their interactions with online platforms, mobile devices, and social media, that can be used for research \citep{keusch2021digital}. Similarly, citizen-science projects, in which volunteers actively contribute to data collection, have become an important source of data \citep{ponti2020citizen,pratesi2021citizen}. These new forms of data have considerably expanded the range of information available to researchers and official statistical institutes for studying socio-economic phenomena.

Such large and complex datasets, commonly referred to as Big Data, often capture information from a substantial proportion of the population within a given geographical area. Big Data sources offer important advantages, including high geographical and temporal granularity, and the possibility of measuring phenomena that would otherwise be difficult or prohibitively expensive to observe through conventional data collection systems. For these reasons, Big Data has attracted growing interest within the SAE literature.

Two main approaches have emerged for incorporating these non-traditional data sources into SAE models \citep{marchetti2015bigdata}.
In the first approach, the outcome variable of interest is derived directly from the Big Data source. In this case, it is important to recognise that Big Data are, usually, not collected through a probabilistic sampling design and data production is outside of the researchers' control. Consequently, such data are highly selective and may systematically exclude part of the target population. Researchers must therefore address the selection bias that arises from the non-probability nature of the data source. Within the SAE framework, \citet{iacus2020controlling} propose a strategy to reduce selection bias in a subjective well-being indicator derived from Twitter (now X) data through the inclusion of weights in a spatio-temporal SAE model. More recently, \citet{schirripa2025inference} proposed an approach based on the integration of probability and non-probability samples to reduce selection bias when predicting local-level indicators for enterprises, assuming that the variable of interest is observed only in the Big Data source.

The second approach uses information extracted from Big Data as auxiliary variables within SAE models. In this setting, Big Data-derived variables serve as \textit{proxies} for relevant individual- or area-level characteristics that explain regional variability and improve the precision of small area estimates. Several studies have demonstrated the potential of this strategy. \citet{porter2014spatial} incorporate Google Trends data as covariates in a spatial Fay-Herriot (FH) model to estimate relative changes in the percentage of household Spanish-speakers in the eastern half of the United States. \citet{marchetti2015bigdata} employed mobility indicators derived from GPS-tracked vehicle movements as auxiliary information in a modified FH model to estimate poverty indicators for local areas in Tuscany. Similarly, \cite{marchetti2016use} use information extracted from Twitter to construct an indicator of well-being, namely iHappy, and include it as a covariate in an FH model to predict household food consumption expenditure across Italian provinces. \cite{schmid2017constructing} exploit mobile phone data to derive sub-national estimates of the share of illiterates disaggregated by gender in Senegal. 

The advantage of using Big Data covariates lies in the fact that such variables are potentially able to explain both sub-national variability of the phenomenon of interest, and, as opposed to traditional sources, are available at a granular level.
While they can provide valuable information, their availability is subject to certain constraints, for example data access limitations, volatility over time (e.g., the output depends on when the data are accessed), and errors when transforming unstructured data (e.g., social media messages) into structured data (e.g., sentiment scores). Thus, to some extent, they are affected by a form of measurement error.

In this article, we focus on the second approach. Existing methodologies have  addressed measurement error and spatial dependence separately. For methodological developments in spatial modelling approaches in SAE, we refer the reader to the comprehensive reviews by \cite{pratesi2008small} and \cite{Petrucci06}. In many practical applications involving Big Data sources, both features are present simultaneously: auxiliary variables are often affected by measurement error, while the target phenomenon exhibits spatial dependence. If ignored, this can lead to invalid inference.

This article is motivated by this practical research challenge and proposes a spatial measurement error Fay-Herriot model that jointly accounts for measurement error in auxiliary variables and spatial dependence between areas. The proposed model extends the framework of \cite{Ybarra08} by incorporating spatial dependence, thereby enabling information borrowing across neighbouring areas while properly adjusting for measurement error in Big Data-derived covariates. The remainder of the paper is organised as follows. Section \ref{sec:FH} introduces the Fay-Herriot model and its spatial extension. Section \ref{Sec:spatialFH} reviews the Fay-Herriot model with measurement error and extends it to account for spatial dependence. Section \ref{sec:simulation} presents a simulation study to evaluate the properties of the proposed model. Section \ref{sec:application} illustrates the proposed method through an application that combines European Social Survey data and Big Data derived from Booking.com to study attitudes towards climate change in Spain. Section \ref{sec:conclusions} concludes and discusses directions for future research.

\section{The Fay-Herriot model}\label{sec:FH}

Assume that we have area-level data of the form $U = U_1, \dots, U_D$, where each $U_d$ for $d = 1, \dots, D$ is disjoint. Our aim is to estimate a target parameter $Y$ over the population $U$. Here, we consider small area means $\bm{y} = (y_1, \dots, y_D)$ over the corresponding areas $U_1, \dots, U_D$. We assume that these small area means follow a Fay-Herriot model \citep{FayHerriot79}, since we consider area-level data. 

In the Fay-Herriot modelling framework, one typically has 
\begin{equation}
\hat{\bm{y}} = \bm{y} + \bm{s},
\end{equation}
with the idea being that $\hat{\bm{y}}$ is a design-unbiased direct estimator for $\bm{y}$ with known sampling error $\bm{s}$. We assume that $\bm{s} \sim \mathcal{N}(0, \bm{R})$ where $\bm{R} = \sigma_{s}^{2}\bm{I}$. The quantity $\bm{y}$ is in turn defined as

\begin{equation}
    \bm{y} = \bm{X}\bm{\beta} + \bm{Z}\bm{v},
\end{equation}

where $\bm{X}$ is a $D \times P$ matrix of covariate data such that $\bm{X} = (\bm{X}_1, \dots, \bm{x}_d)^T$ where $\bm{x}_d = (x_{d1}, \dots, x_{dP})$ for $d = 1, \dots, D$, $\bm{\beta} = (\beta_1, \dots, \beta_P)^T$ is a $P \times 1$ parameter vector, $\bm{Z}$ is a $D \times D$ matrix of known constants and $\bm{v}$ is a $D \times 1$ vector denoting the random effects, taking the between-area variation into account.

The random effects are assumed to be independently and identically distributed as
\[
\bm{v} \sim \mathcal{N}(\bm{0}, \sigma_v^2\bm{I}),
\]
and independent of the sampling errors $\bm{s}$. Combining the sampling and linking models yields the linear mixed model

\begin{equation}
\hat{\bm{y}} = \bm{X}\bm{\beta} + \bm{Z}\bm{v} + \bm{s},
\end{equation}

with
\[
\mathbb{E}(\hat{\bm{y}})=\bm{X}\bm{\beta}
\]
and
\[
\mathrm{var}(\hat{\bm{y}})
=\bm{V}
=\sigma_v^2\bm{Z}\bm{Z}^{T}+\bm{R}.
\]

Under the assumption that the variance component is known, the best linear unbiased predictor (BLUP) of the small area means is given by
\begin{equation}
\tilde{\bm{y}}
=\bm{X}\tilde{\bm{\beta}}
+\sigma_v^2\bm{Z}\bm{Z}^{T}\bm{V}^{-1}
\left(\hat{\bm{y}}-\bm{X}\tilde{\bm{\beta}}\right),
\end{equation}
where
\[
\tilde{\bm{\beta}}
=
(\bm{X}^{T}\bm{V}^{-1}\bm{X})^{-1}
\bm{X}^{T}\bm{V}^{-1}\hat{\bm{y}}
\]
is the generalized least squares (GLS) estimator of $\bm{\beta}$. In practice, the variance component $\sigma_v^2$ is unknown and must be estimated from the data, typically using maximum likelihood (ML) or restricted maximum likelihood (REML). Replacing $\sigma_v^2$ by an estimator $\hat{\sigma}_v^2$ yields the empirical best linear unbiased predictor (EBLUP),
\begin{equation}\label{eq:eblupFH}
\hat{\bm{y}}^{EBLUP}
=
\bm{X}\hat{\bm{\beta}}
+
\hat{\sigma}_v^2\bm{Z}\bm{Z}^{T}\hat{\bm{V}}^{-1}
\left(
\hat{\bm{y}}-\bm{X}\hat{\bm{\beta}}
\right),
\end{equation}
where
\[
\hat{\bm{V}}
=
\hat{\sigma}_v^2\bm{Z}\bm{Z}^{T}
+
\bm{R}.
\]

The EBLUP can be interpreted as a weighted average of the direct estimator and the regression-synthetic estimator, where the weights depend on the relative magnitude of the sampling variance and the between-area variance. As a result, areas with large sampling variances borrow more strength from the auxiliary information contained in $\bm{X}$, whereas areas with more precise direct estimates rely less on the model-based component.

The mean squared error (MSE) of \ref{eq:eblupFH} can be estimated analytically \citep{datta2000unified, Prasad90}.

\subsection{Spatial extension}
\label{sec:spat_extension}
We wish to model the random effect $\bm{v}$ via a spatial process. One way to capture spatial effects involves letting $\bm{v}$ regress simultaneously on itself, taking into account some normally distributed error. This is known as a simultaneous autoregressive (SAR) model, and is given by
\begin{equation}
    \bm{v} = \bm{B}\bm{v} + \bm{\varepsilon},
\end{equation}
where $\bm{B}$ is a $D \times D$ matrix accounting for spatial dependence in the second moment and $\bm{\varepsilon} \sim \mathcal{N}(0,\sigma_v^2\bm{I})$ \citep{Lieshout19, VerHoef18}. Usually, $\bm{B} = \rho \bm{W}$, where $\rho$ is an autocorrelation parameter and $\bm{W}$ is the proximity matrix. Throughout this work, we allow $\bm{W}$ to be non-symmetric. We therefore have $\bm{v} \sim \mathcal{N}(\bm{0}, \bm{G})$ where
\begin{equation}
    \bm{G} =\text{cov}(\bm{v}) = \sigma_v^2((\bm{I} - \rho\bm{W})(\bm{I} - \rho\bm{W}^T))^{-1}.
\end{equation}
We need to guarantee that $(\bm{I} - \rho \bm{W})^{-1}$ exists (note that this is sufficient for the whole product inverse to exist). For SAR models, this is the case when $\rho \notin \{\lambda_d^{-1}; d = 1, \dots, D \}$ \citep{Li12, VerHoef18}, where $\lambda_{i}$ is the $i$th eigenvalue of $\bm{W}$. This means that we need to ensure that $-1 < \rho < 1$. Note that if this condition is satisfied, $\bm{G}$ is positive definite. 

Similarly to the case of the traditional Fay--Herriot model, the EBLUP (spatial EBLUP) can be written as follows (notice that we now have the $\bm{G}$ matrix as well): 

\begin{equation}\label{eq:seblup}
\hat{\bm{y}}^{SEBLUP}
=
\bm{X}\hat{\bm{\beta}}
+
\bm{Z}\hat{\bm{G}}\bm{Z}^{T}\hat{\bm{V}}^{-1}
\left(
\hat{\bm{y}}
-
\bm{X}\hat{\bm{\beta}}
\right),
\end{equation}

where
\[
\hat{\bm{V}}
=
\bm{Z}\hat{\bm{G}}\bm{Z}^{T}
+
\bm{R},
\]
and
\[
\hat{\bm{G}}
=
\hat{\sigma}_v^2
\left(
(\bm{I}-\hat{\rho}\bm{W})
(\bm{I}-\hat{\rho}\bm{W}^{T})
\right)^{-1}.
\]

The spatial EBLUP (SEBLUP) borrows strength not only through the auxiliary information contained in $\bm{X}$ but also through the spatial dependence structure given in $\bm{W}$. Consequently, neighbouring areas contribute information to one another, potentially improving the precision of estimates.

The MSE of \ref{eq:seblup} can be estimated analytically \citep{singh2005spatio} or via bootstrap \citep{molina2009bootstrap}. These types of spatial models have been extensively studied in small area estimation, and for more details on this we refer to \cite{pratesi2008small} and \cite{Petrucci06}.

\section{The spatial Fay-Herriot model with measurement error in the covariates}\label{Sec:spatialFH}

When using non-traditional data sources as auxiliary variables, it may be the case that some of them are measured with error. In that case, we only have knowledge of an estimator $\hat{\bm{X}}$ of $\bm{X}$. \citet{Ybarra08} proposed the model
\begin{equation}\label{eq:YLModel}
    \hat{y}_d = \hat{\bm{x}}_d\bm{\beta} + r_d(\hat{\bm{x}_d}, \bm{x}_d) + s_d,
\end{equation}
where 
\begin{equation}
    r_d(\hat{\bm{x}_d}, \bm{x}_d) = v_d + (\bm{x}_d - \hat{\bm{x}}_d)\bm{\beta}.
\end{equation}
Independence assumptions, i.e., $\hat{\bm{X}}$ and $\hat{\bm{y}}$ are independent, are in order. In \citet{Ybarra08}, the assumption $v_d \sim \mathcal{N}(0, \sigma_v^2)$ is made. In this section, we use the notation as in \cite{Ybarra08}. 

\subsection{Model definition}
\label{sec:model_def}
We investigate the case when the random effects $\bm{v}$ are spatially correlated, i.e. $\bm{v} \sim \mathcal{N}(\bm{0}, \bm{G})$, and the covariates are potentially measured with error. We will refer to this model construction as the spatial Fay-Herriot model with measurement error. In this case, the model takes the same form as in equation \ref{eq:YLModel}, however $r_d$ becomes

\begin{equation}
    r_d(\hat{\bm{x}_d}, \bm{x}_d) = \bm{z}_d\bm{v} + (\bm{x}_d - \hat{\bm{x}_d})\bm{\beta}
\end{equation}
where $\bm{z}_d$ is the $d$th row of $\bm{Z}$, taking into account the spatial extension. The full model is

\begin{equation}
        \hat{y}_d = \hat{\bm{x}}_d\bm{\beta} + (\bm{x}_d - \hat{\bm{x}_d})\bm{\beta} + \bm{z}_d\bm{v} + s_d.
        \label{eq:spat_yl_formulation}
\end{equation}
Note that $\text{MSE}(\hat{\bm{x}_d}) = \bm{C}_d =  \mathbb{E}[(\bm{x}_d - \hat{\bm{x}_d})^T(\bm{x}_d - \hat{\bm{x}_d})]$, as in \citet{Ybarra08}. 

We start by estimating the MSE of $r_d$ in the spatial case. 

\begin{proposition}[MSE of $r_d$]
\label{thm:mse_r}
    Assume that $\hat{\bm{X}}$ is an unbiased estimator of $\bm{X}$. Then \[\text{MSE}\,(r_d) = \bm{z}_d\bm{G}\bm{z}_d^T + \bm{\beta}^T\bm{C}_d\bm{\beta}.\]
\end{proposition}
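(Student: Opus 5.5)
Since $r_d$ is the composite error $\bm{z}_d\bm{v} + (\bm{x}_d - \hat{\bm{x}}_d)\bm{\beta}$, which acts as a predictor error with target zero, we interpret $\text{MSE}(r_d)$ as $\mathbb{E}[r_d^2]$. The plan is to expand the square and evaluate each of the three resulting terms separately, using the moment assumptions of Section~\ref{sec:model_def}. Writing $\bm{e}_d = \bm{x}_d - \hat{\bm{x}}_d$ (a $1\times P$ row vector), we have
\[
r_d^2 = (\bm{z}_d\bm{v})^2 + 2(\bm{z}_d\bm{v})(\bm{e}_d\bm{\beta}) + (\bm{e}_d\bm{\beta})^2 .
\]

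For the first term, $\bm{z}_d\bm{v}$ is a scalar with $\mathbb{E}[\bm{z}_d\bm{v}]=0$. Hence $\mathbb{E}[(\bm{z}_d\bm{v})^2] = \bm{z}_d\,\mathbb{E}[\bm{v}\bm{v}^T]\,\bm{z}_d^T = \bm{z}_d\bm{G}\bm{z}_d^T$, using $\bm{v}\sim\mathcal{N}(\bm{0},\bm{G})$ with $\bm{G}$ the SAR covariance from Section~\ref{sec:spat_extension}. The only change from \citet{Ybarra08} is that $\sigma_v^2$ is replaced by this quadratic form. For the third term, $\bm{e}_d\bm{\beta}$ is a scalar, so $(\bm{e}_d\bm{\beta})^2 = \bm{\beta}^T\bm{e}_d^T\bm{e}_d\bm{\beta}$. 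With $\bm{\beta}$ fixed, this gives $\mathbb{E}[(\bm{e}_d\bm{\beta})^2] = \bm{\beta}^T\mathbb{E}[\bm{e}_d^T\bm{e}_d]\bm{\beta} = \bm{\beta}^T\bm{C}_d\bm{\beta}$, directly from the definition of $\bm{C}_d$.

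The main step is showing that the cross term vanishes, and it is the only place where the hypotheses are used. We use the model's independence assumption: the random effects $\bm{v}$, which are part of the model for $\bm{y}$, are independent of the measurement error in $\hat{\bm{X}}$. This follows from the independence of $\hat{\bm{X}}$ and $\hat{\bm{y}}$, treating $\bm{x}_d$ as fixed. The cross term then factorises as $2\,\bm{z}_d\,\mathbb{E}[\bm{v}]\,\mathbb{E}[\bm{e}_d]\bm{\beta}$. It is zero because $\mathbb{E}[\bm{v}]=\bm{0}$, and also because unbiasedness of $\hat{\bm{X}}$ gives $\mathbb{E}[\bm{e}_d]=\bm{0}$.

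Unbiasedness is also what makes $\bm{C}_d$ equal to the variance of $\hat{\bm{x}}_d$. It therefore ensures that $\mathbb{E}[r_d]=0$, so the MSE of $r_d$ coincides with its variance. Summing the three terms gives $\text{MSE}(r_d) = \bm{z}_d\bm{G}\bm{z}_d^T + \bm{\beta}^T\bm{C}_d\bm{\beta}$, as claimed. We should state explicitly in the proof that the independence between $\bm{v}$ and $\hat{\bm{X}}$ is assumed. Strictly, it is slightly stronger than the independence of $\hat{\bm{X}}$ and $\hat{\bm{y}}$ written earlier, although it is the standard reading in \citet{Ybarra08}.
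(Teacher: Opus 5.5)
Your proof is correct and takes the same route as the paper's argument. You expand $\mathbb{E}[r_d^2]$, read off $\bm{z}_d\bm{G}\bm{z}_d^T$ and $\bm{\beta}^T\bm{C}_d\bm{\beta}$ from the two squared terms, and make the cross term vanish using independence of $\bm{v}$ and $\hat{\bm{X}}$ together with $\mathbb{E}[\bm{v}]=\bm{0}$. One correction: that independence does \emph{not} follow from independence of $\hat{\bm{X}}$ and $\hat{\bm{y}}$, which only controls $\hat{\bm{X}}$ against the sum $\bm{Z}\bm{v}+\bm{s}$; as your closing remark already says, state it as an assumption, which is what the paper's independence hypotheses in Theorem~\ref{thm:seblup} amount to.
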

\begin{proof}
    See appendix.
\end{proof}
\begin{corollary}
    Defining $\bm{C}_{\beta} = \text{diag}_{1\leq d \leq D}(\bm{\beta}^T\bm{C}_d\bm{\beta})$, we can rewrite the MSE in matrix notation as 
    \begin{align*}
        \text{MSE}(\bm{r}) = \bm{Z}\bm{G}\bm{Z}^T + \bm{C}_{\beta}
    \end{align*}
    where $\bm{r}$ is a $D \times 1$ vector.
\end{corollary}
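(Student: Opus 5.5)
The plan is to compute the second moment of $r_d$ directly, since $\text{MSE}(r_d)=\mathbb{E}[r_d^2]$ once we show $\mathbb{E}[r_d]=0$. First, $\mathbb{E}[r_d] = \bm{z}_d\mathbb{E}[\bm{v}] + (\bm{x}_d - \mathbb{E}[\hat{\bm{x}}_d])\bm{\beta} = 0$. This uses $\bm{v}\sim\mathcal{N}(\bm{0},\bm{G})$ from Section~\ref{sec:spat_extension} and the unbiasedness assumption $\mathbb{E}[\hat{\bm{x}}_d]=\bm{x}_d$, with $\bm{x}_d$ and $\bm{\beta}$ treated as fixed. Hence the MSE coincides with the variance, and we expand
\[
r_d^2 = (\bm{z}_d\bm{v})^2 + 2(\bm{z}_d\bm{v})(\bm{x}_d-\hat{\bm{x}}_d)\bm{\beta} + \bigl((\bm{x}_d-\hat{\bm{x}}_d)\bm{\beta}\bigr)^2 .
\]

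Next we treat the three terms separately. For the first, $\mathbb{E}[(\bm{z}_d\bm{v})^2]=\bm{z}_d\mathbb{E}[\bm{v}\bm{v}^T]\bm{z}_d^T=\bm{z}_d\bm{G}\bm{z}_d^T$ by the definition of $\bm{G}$ as $\text{cov}(\bm{v})$. For the third, since $(\bm{x}_d-\hat{\bm{x}}_d)\bm{\beta}$ is a scalar, we write it as $\bm{\beta}^T(\bm{x}_d-\hat{\bm{x}}_d)^T(\bm{x}_d-\hat{\bm{x}}_d)\bm{\beta}$. Taking expectations and pulling out the constant $\bm{\beta}$ gives $\bm{\beta}^T\bm{C}_d\bm{\beta}$, with $\bm{C}_d$ as defined just before the proposition.

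The cross term is the only place an extra assumption enters, and I expect it to be the main point needing care. I would invoke the independence assumptions stated after \eqref{eq:YLModel}, following \citet{Ybarra08}: the measurement error in $\hat{\bm{X}}$ is independent of the model components $(\bm{v},\bm{s})$. Then
\[
\mathbb{E}[(\bm{z}_d\bm{v})(\bm{x}_d-\hat{\bm{x}}_d)\bm{\beta}] = \bm{z}_d\mathbb{E}[\bm{v}]\,\mathbb{E}[(\bm{x}_d-\hat{\bm{x}}_d)]\bm{\beta}=0 .
\]
Either factor vanishing suffices, so only uncorrelatedness is really needed. I would state explicitly in the proof that this independence between $\hat{\bm{X}}$ and $\bm{v}$ is the reading of the paper's independence assumption being used.

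Summing the three pieces gives $\text{MSE}(r_d)=\bm{z}_d\bm{G}\bm{z}_d^T+\bm{\beta}^T\bm{C}_d\bm{\beta}$. The corollary then follows by stacking the rows, under the additional (Ybarra–Lohr style) assumption that measurement errors are independent across areas. In that case the off-diagonal measurement-error terms vanish, the $\bm{v}$-part assembles into $\bm{Z}\bm{G}\bm{Z}^T$, and the measurement-error part is the diagonal matrix $\bm{C}_\beta$.
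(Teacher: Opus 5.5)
Your proposal is correct and follows essentially the same route as the paper: expand $\mathbb{E}[r_d^2]$, use $\mathbb{E}[\hat{\bm{x}}_d]=\bm{x}_d$ together with the independence of $\hat{\bm{X}}$ from $\bm{v}$ to remove the cross term and obtain Proposition~\ref{thm:mse_r}, then assemble the matrix form using $\bm{C}_\beta$. You also note that the off-diagonal entries require measurement errors uncorrelated across areas, while the $\bm{v}$-part supplies $\bm{z}_i\bm{G}\bm{z}_j^T$. This is a worthwhile addition, because the paper's corollary leaves it implicit and only states cross-area independence later, in Theorem~\ref{thm:seblup}.
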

We now continue by finding the empirical best linear unbiased predictor (EBLUP) of this spatial measurement error model. The EBLUP is a convex linear combination of the direct survey estimate $\hat{\bm{y}}$ and the mean $\mathbb{E}[\hat{\bm{y}}] = \hat{\bm{X}}^T\hat{\bm{\beta}}$ \citep{Henderson50, Rao15} and is given in the following theorem.

\begin{theorem}[Spatial measurement error EBLUP]
\label{thm:seblup}
    Assume that $\mathbb{E}[\hat{\bm{X}}] = \bm{X}$, \newline $\text{MSE}\,(\hat{\bm{x}}_d) = \bm{C}_d$, $\text{MSE}\,(r_d) = \bm{z}_d\bm{G}\bm{z}_d^T + \bm{\beta}^T\bm{C}_d\bm{\beta}$ and $(\hat{\bm{x}}_i, \bm{z}_i, s_i)$ is independent of $(\hat{\bm{x}}_j, \bm{z}_j, s_j)$ for any $i,j = 1, \dots, D.$ Additionally assume that $\hat{\bm{x}}_d$, $\bm{z}_d$ and $s_d$ are independent of each other for $d = 1, \dots, D$. Replace $\bm{\beta}, \sigma_v^2, \rho$ with estimators $\hat{\bm{\beta}}, \hat{\sigma}_v^2, \hat{\rho}$, assuming these exist. The EBLUP becomes
    
\begin{equation}\label{eq:spatial_EBLUPYL}
\begin{split}
\hat{y}^{SEBLUP,ME}_{d}
=&\,
\hat{\bm{x}}_d\hat{\bm{\beta}}
+\bm{z}_d \bm{b}_d^T
\left(
\hat{\sigma}_v^2
\bigl(
(\bm{I}-\hat{\rho}\bm{W})
(\bm{I}-\hat{\rho}\bm{W}^T)
\bigr)^{-1}
\bm{Z}^T
+\bm{C}_{\bm{\beta}}
\right) \\
&\times
\Bigl[
\bm{R}
+
\bm{Z}\hat{\sigma}_v^2
\bigl(
(\bm{I}-\hat{\rho}\bm{W})
(\bm{I}-\hat{\rho}\bm{W}^T)
\bigr)^{-1}
\bm{Z}^T
+\bm{C}_{\bm{\beta}}
\Bigr]^{-1}
(\hat{\bm{y}}-\hat{\bm{X}}\hat{\bm{\beta}}).
\end{split}
\end{equation}

where $\bm{b}_d^T = (0, \dots, 0, 1, 0, \dots, 0)$ is of dimension $1 \times D$ with $1$ in the $d$th position, keeping consistent with the notation in \citet{Petrucci06}.  
\end{theorem}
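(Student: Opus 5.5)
We would prove the statement in two stages. First we derive the BLUP with $\bm{\beta},\sigma_v^2,\rho$ treated as known. Then we obtain the EBLUP by plug-in, exactly as in the passage from the BLUP to \eqref{eq:eblupFH} and \eqref{eq:seblup}.

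Write the stacked model as $\hat{\bm{y}}=\hat{\bm{X}}\bm{\beta}+\bm{r}+\bm{s}$ with $\bm{r}=\bm{Z}\bm{v}+(\bm{X}-\hat{\bm{X}})\bm{\beta}$. The target is $y_d=\bm{x}_d\bm{\beta}+\bm{z}_d\bm{v}=\hat{\bm{x}}_d\bm{\beta}+r_d$, so predicting $y_d$ given $\hat{\bm{X}}$ reduces to predicting $r_d=\bm{b}_d^T\bm{r}$ from the residual vector $\bm{e}=\hat{\bm{y}}-\hat{\bm{X}}\bm{\beta}=\bm{r}+\bm{s}$.

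\textbf{Second moments.} Since $\mathbb{E}[\hat{\bm{X}}]=\bm{X}$ and $\mathbb{E}[\bm{v}]=\bm{0}$, both $\bm{r}$ and $\bm{s}$ have mean zero. By the corollary to Proposition~\ref{thm:mse_r}, $\mathrm{var}(\bm{r})=\bm{Z}\bm{G}\bm{Z}^T+\bm{C}_{\bm{\beta}}$. The block $\bm{C}_{\bm{\beta}}$ is diagonal because the $(\hat{\bm{x}}_i,s_i)$ are independent across areas and the cross terms between $\bm{Z}\bm{v}$ and $(\bm{X}-\hat{\bm{X}})\bm{\beta}$ vanish by the within-area independence assumption. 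The same assumption gives $\mathrm{cov}(\bm{r},\bm{s})=\bm{0}$. Hence
\[
\mathrm{var}(\bm{e})=\bm{R}+\bm{Z}\bm{G}\bm{Z}^T+\bm{C}_{\bm{\beta}},\qquad \mathrm{cov}(r_d,\bm{e})=\bm{b}_d^T(\bm{Z}\bm{G}\bm{Z}^T+\bm{C}_{\bm{\beta}}).
\]
The minimum-MSE linear predictor of $r_d$ is therefore $\mathrm{cov}(r_d,\bm{e})\,\mathrm{var}(\bm{e})^{-1}\bm{e}$, which is the usual Henderson argument \citep{Henderson50, Rao15} applied to a zero-mean pair. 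Substituting $\bm{G}=\sigma_v^2((\bm{I}-\rho\bm{W})(\bm{I}-\rho\bm{W}^T))^{-1}$ and adding $\hat{\bm{x}}_d\bm{\beta}$ gives the BLUP.

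We then need to reconcile the resulting form with the displayed one, which places $\bm{z}_d\bm{b}_d^T$ in front of $\bm{G}\bm{Z}^T+\bm{C}_{\bm{\beta}}$, following \citet{Petrucci06}. We would read this as the $d$th row of $\bm{Z}\bm{G}\bm{Z}^T+\bm{C}_{\bm{\beta}}$, with the convention that $\bm{C}_{\bm{\beta}}$ picks out its $(d,d)$ entry. The identification is immediate when $\bm{Z}=\bm{I}$, the standard SAR case. In general we would state this identification explicitly rather than claim a literal matrix identity.

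\textbf{Unbiasedness and plug-in.} When $\bm{\beta}$ is replaced by its GLS estimator $\tilde{\bm{\beta}}=(\hat{\bm{X}}^T\bm{V}^{-1}\hat{\bm{X}})^{-1}\hat{\bm{X}}^T\bm{V}^{-1}\hat{\bm{y}}$, with $\bm{V}=\mathrm{var}(\bm{e})$, unbiasedness follows conditionally on $\hat{\bm{X}}$ as in the standard Fay--Herriot case. Finally, replacing $\bm{\beta},\sigma_v^2,\rho$ by $\hat{\bm{\beta}},\hat{\sigma}_v^2,\hat{\rho}$ yields \eqref{eq:spatial_EBLUPYL}. Here $\bm{C}_{\bm{\beta}}$ is evaluated at $\hat{\bm{\beta}}$, and $\hat{\rho}\in(-1,1)$ ensures that the inverse exists, as discussed in Section~\ref{sec:spat_extension}.

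\textbf{Main obstacle.} The delicate step is the conditioning. The mean of $\hat{\bm{y}}$ given $\hat{\bm{X}}$ is not exactly $\hat{\bm{X}}\bm{\beta}$ unless $\mathbb{E}[\bm{x}_d-\hat{\bm{x}}_d\mid\hat{\bm{x}}_d]=0$, and $\bm{r}$ may be correlated with $\hat{\bm{X}}$. Following \citet{Ybarra08}, we would treat $\hat{\bm{X}}$ as fixed and use the unconditional MSE $\bm{C}_d$ as the working variance. The claim is then that the predictor is best among linear predictors under this working moment structure. We would make this interpretation explicit rather than attempt an exact conditional BLUP.
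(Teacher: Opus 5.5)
Your proposal is correct and follows essentially the paper's route: you treat $\bm{r}$ as a zero-mean composite effect with $\mathrm{var}(\bm{r})=\bm{Z}\bm{G}\bm{Z}^T+\bm{C}_{\bm{\beta}}$ (the corollary to Proposition~\ref{thm:mse_r}), write down the Henderson best linear predictor $\hat{\bm{x}}_d\bm{\beta}+\bm{b}_d^T(\bm{Z}\bm{G}\bm{Z}^T+\bm{C}_{\bm{\beta}})\bm{V}^{-1}(\hat{\bm{y}}-\hat{\bm{X}}\bm{\beta})$, and plug in $\hat{\bm{\beta}},\hat{\sigma}_v^2,\hat{\rho}$, and reading $\bm{z}_d\bm{b}_d^T(\cdot)$ as the $d$th row of $\bm{Z}\bm{G}\bm{Z}^T+\bm{C}_{\bm{\beta}}$ is the right repair of the display (it coincides with it when $\bm{Z}=\bm{I}$). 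Two tightenings are needed: first, the off-diagonal cross terms $\mathrm{cov}(\bm{z}_i\bm{v},(\bm{x}_j-\hat{\bm{x}}_j)\bm{\beta})$ vanish only if $\hat{\bm{X}}$ is independent of the whole vector $\bm{v}$ (within-area independence is not enough, since $\bm{z}_i\bm{v}$ involves every $v_j$); second, the GLS step should be dropped, because $\bm{V}$ depends on $\bm{\beta}$ through $\bm{C}_{\bm{\beta}}$ (which is why the paper estimates $\bm{\beta}$ by ML) and it is not needed anyway: minimizing the unconditional MSE over predictors $\hat{\bm{x}}_d\bm{\beta}+\bm{\gamma}^T(\hat{\bm{y}}-\hat{\bm{X}}\bm{\beta})$ with fixed $\bm{\gamma}$ (the spatial analogue of the Ybarra--Lohr class, whose optimality is unconditional in $\hat{\bm{X}}$ rather than conditional on it) yields both unbiasedness and your formula from unconditional moments alone, so the conditioning issue you flag never arises.
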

\begin{proof}
    See appendix.
\end{proof}
\noindent
When there is no measurement error, $\bm{C}_{\bm{\beta}}$ is the zero matrix and the EBLUP reduces to the standard spatial EBLUP as in \citet{Petrucci06}. Note that when $\bm{C}_{\bm{\beta}} - \bm{Z}\hat{\bm{G}}\bm{Z}^T$ is positive definite, using the classical spatial EBLUP becomes worse than using the direct estimator, as \citet{Ybarra08} showed for the standard Fay-Herriot model. Not adding this correction term leads to underestimation of the MSE when errors in $\bm{X}$ are ignored.

\subsection{Log-likelihood and derivatives}
\label{sec:loglik_deriv}
We provide the full model formulation in Section~\ref{sec:model_def}, equation~\ref{eq:spat_yl_formulation}. Note that in most cases, $\bm{x}_d - \hat{\bm{x}_d}$, the difference between the actual, unknown covariate values and the estimates, is not known a priori. In practice, it is usually assumed that $\bm{x}_d - \hat{\bm{x}_d} = \bm{u}_d$ where $\bm{u}_d \sim \mathcal{N}(0, \bm{C}_d)$ and the $\bm{C}_d$ are known \citep{Burgard20}. Recall that $\bm{C}_d = \text{MSE}(\hat{\bm{x}}_d)$ where $d$ in $1, \dots, D$, $\bm{C}_{\bm{\beta}} = \text{diag}_{1\leq d \leq D}(\bm{\beta}^T\bm{C}_d\bm{\beta})$ and also let $\bm{U} = [\bm{u}_d]_{d=1}^D$. Assuming normality allows us to easily derive estimators and their properties. Note that the $\bm{C_d}$ may be entirely unknown - estimation approaches are beyond the scope of this article. We can rewrite the model from equation~\ref{eq:spat_yl_formulation} as
\begin{equation}
    \hat{\bm{y}} = (\hat{\bm{X}} + \bm{U})\bm{\beta} + \bm{Z}\bm{v} + \bm{s}.
    \label{eq:yl_model_practice}
\end{equation} 

We know, e.g. via proposition~\ref{thm:mse_r}, that $\text{var}(\bm{U}\bm{\beta}) = \bm{C}_{\bm{\beta}}$. Conditionally on $\hat{\bm{X}}$, we hence have the following model structure:
\begin{align*}
    \hat{\bm{y}}\,|\,\hat{\bm{X}} &\sim \mathcal{N}(\hat{\bm{X}}\bm{\beta}, \bm{V}), \\
    \bm{V}(\bm{\beta}, \sigma_v^2, \rho) &= \bm{R} + \bm{Z}\bm{G}\bm{Z}^T + \bm{C}_{\bm{\beta}}, \\
    \bm{G}(\sigma_v^2, \rho) &= \sigma_v^2((\bm{I} - \rho\bm{W})(\bm{I} - \rho\bm{W}^T))^{-1}.
\end{align*}
Write $\bm{e}$ for the residual $\hat{\bm{y}} - \hat{\bm{X}}\hat{\bm{\beta}}$ and recall that $\mathbb{E}(\bm{e}) = 0$. Let the parameter vector be $\bm{\theta} = (\bm{\beta}, \sigma_v^2, \rho)$ and let $\hat{\bm{\beta}}, \hat{\sigma}_v^2, \hat{\rho}$ be their estimators. We also note that as long as $\bm{V}, \bm{M}, \bm{G}$ are invertible, any function $f$ that takes values over any non-empty subset of $\bm{\theta}$ is continuous. Since $\hat{\bm{y}}$ is multivariate normally distributed, the log-likelihood is
\begin{equation}
    \ell(\bm{\theta}) = -\frac{1}{2}\left(D\log(2\pi) + \log|\bm{V}| + \bm{e}^T\bm{V}^{-1}\bm{e}\right),
    \label{eq:loglik}
\end{equation}
see, e.g. \citet{Duchateau04}. 

We continue by taking derivatives. Note that due to the extra term depending on $\bm{\beta}$ in $\bm{V}$, differentiation is not identical to the standard case. We hence do not use the usual generalised least squares (GLS) estimator for $\bm{\beta}$, but instead obtain a variance-corrected estimator via maximum likelihood (ML). The loss of one degree of freedom due to using the GLS estimator has led to restricted maximum likelihood (REML) procedures being favoured in the literature. As we choose not to do this in this case, the utility of applying a REML approach is lost. In fact, this will simply lead to larger MSE, see discussions in \citet{Harville77} and \citet{Vasdekis05} for more details. Therefore, we provide ML estimators for the aforementioned parameters. 
\begin{proposition}[Derivative w.r.t. $\beta$]
\label{thm:derivative_beta}
    Let $\bm{k} = \bm{V}^{-1}\bm{e}$, hence $\mathbb{E}(\bm{k}) = 0$ and $\mathbb{E}(\bm{k}\bm{k}^T) = \bm{V}^{-1}$. The derivative of $\ell(\bm{\theta})$ with respect to $\bm{\beta}$ is
    \begin{equation}
         \frac{\partial\ell}{\partial\bm{\beta}} = \hat{\bm{X}}^T\bm{k} + \sum_{d=1}^D [\bm{k}_d^2 - (\bm{V}^{-1})_{dd}]\bm{C}_d\bm{\beta}.
        \label{eq:derivative_beta}
    \end{equation}
\end{proposition}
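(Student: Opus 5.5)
We differentiate the log-likelihood \eqref{eq:loglik} with respect to $\bm{\beta}$ and treat its two terms separately, because $\bm{\beta}$ enters both the residual $\bm{e} = \hat{\bm{y}} - \hat{\bm{X}}\bm{\beta}$ and the covariance $\bm{V} = \bm{R} + \bm{Z}\bm{G}\bm{Z}^T + \bm{C}_{\bm{\beta}}$ through $\bm{C}_{\bm{\beta}}$. The matrices $\bm{R}$ and $\bm{Z}\bm{G}\bm{Z}^T$ do not depend on $\bm{\beta}$. Hence, for the $p$th coordinate,
\[
\frac{\partial \bm{V}}{\partial \beta_p} = \frac{\partial \bm{C}_{\bm{\beta}}}{\partial \beta_p} = \text{diag}_{1\le d\le D}\bigl(2(\bm{C}_d\bm{\beta})_p\bigr),
\]
using the symmetry of each $\bm{C}_d$, which holds since it is an MSE matrix. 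This diagonal structure is what will produce the sum over $d$ in \eqref{eq:derivative_beta}.

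\emph{Log-determinant term.} I would use $\partial \log|\bm{V}|/\partial\beta_p = \text{tr}(\bm{V}^{-1}\partial\bm{V}/\partial\beta_p)$. Because $\partial\bm{V}/\partial\beta_p$ is diagonal, only the diagonal entries of $\bm{V}^{-1}$ survive:
\[
\frac{\partial \log|\bm{V}|}{\partial\beta_p} = \sum_d 2(\bm{V}^{-1})_{dd}(\bm{C}_d\bm{\beta})_p .
\]
After multiplying by $-\tfrac12$ and stacking over $p$, this gives $-\sum_d (\bm{V}^{-1})_{dd}\bm{C}_d\bm{\beta}$.

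\emph{Quadratic form term.} Here I would apply the product rule. First, the residual part: $\partial\bm{e}/\partial\beta_p = -\hat{\bm{X}}_{\cdot p}$, so differentiating $\bm{e}^T\bm{V}^{-1}\bm{e}$ through $\bm{e}$ gives $-2\hat{\bm{X}}_{\cdot p}^T\bm{V}^{-1}\bm{e}$. Second, the covariance part: using $\partial\bm{V}^{-1} = -\bm{V}^{-1}(\partial\bm{V})\bm{V}^{-1}$ gives
\[
-\bm{k}^T\,\frac{\partial\bm{V}}{\partial\beta_p}\,\bm{k} = -\sum_d 2k_d^2(\bm{C}_d\bm{\beta})_p ,
\]
where $\bm{k} = \bm{V}^{-1}\bm{e}$. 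Multiplying by $-\tfrac12$ and stacking yields $\hat{\bm{X}}^T\bm{k} + \sum_d k_d^2\,\bm{C}_d\bm{\beta}$. Adding the log-determinant contribution then gives \eqref{eq:derivative_beta}.

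The moment statements about $\bm{k}$ follow directly from the conditional model $\hat{\bm{y}}\,|\,\hat{\bm{X}} \sim \mathcal{N}(\hat{\bm{X}}\bm{\beta},\bm{V})$, evaluated at the true parameter. There $\mathbb{E}(\bm{e}) = 0$ and $\mathbb{E}(\bm{e}\bm{e}^T) = \bm{V}$, so $\mathbb{E}(\bm{k}) = 0$ and $\mathbb{E}(\bm{k}\bm{k}^T) = \bm{V}^{-1}\bm{V}\bm{V}^{-1} = \bm{V}^{-1}$. A useful by-product is that the bracket $k_d^2 - (\bm{V}^{-1})_{dd}$ has mean zero, which confirms that the score is unbiased.

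The main difficulty is bookkeeping rather than anything conceptual. One has to get the factor $2$ from differentiating $\bm{\beta}^T\bm{C}_d\bm{\beta}$, which requires the symmetry of $\bm{C}_d$, and it must cancel correctly against the $\tfrac12$ in the likelihood. One also has to recognise that $\bm{e}$ is being differentiated at a generic $\bm{\beta}$, not at a fixed estimate. I would handle this by working coordinatewise as above rather than with matrix differentials.
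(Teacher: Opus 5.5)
Your proposal is correct and follows the route the paper's own setup points to. You use the trace formula for $\log|\bm{V}|$, the product rule with $\partial\bm{V}^{-1} = -\bm{V}^{-1}(\partial\bm{V})\bm{V}^{-1}$ for $\bm{e}^T\bm{V}^{-1}\bm{e}$, and the diagonal derivative $\partial\bm{V}/\partial\beta_p = \mathrm{diag}_{d}\bigl(2(\bm{C}_d\bm{\beta})_p\bigr)$, which is the paper's $(\bm{V})'_{\beta} = 2\,\mathrm{diag}(\bm{C}_d\bm{\beta})$, so only $(\bm{V}^{-1})_{dd}$ and $k_d^2$ survive. The appendix proof is not reproduced in the source, so this comparison rests on that notation rather than on the proof itself. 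Treating $\bm{e}$ as a function of a generic $\bm{\beta}$, and reading the moment identities for $\bm{k}$ at the true parameter, is also the right way to handle the paper's looser notation.
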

\begin{proof}
    See appendix.
\end{proof}
In the standard case without measurement error, the $\bm{\beta}$-derivative only contains the first term, and can be solved directly for $\bm{\beta}$ (GLS estimator). The second term in this model can be interpreted as a variance correction term in the presence of measurement error.

The scalar derivatives with respect to $\sigma_v^2$ and $\rho$ do not change from the standard spatial Fay-Herriot model, and are provided in \citet{pratesi2008small}. For completeness, we give them below. Note that we do not require symmetric $\bm{W}$, meaning that the derivatives may be somewhat different to other forms found in other works. 

Let $\bm{M} = (\bm{I} - \rho\bm{W})(\bm{I} - \rho\bm{W}^T)$. Now $\bm{G} = \sigma_v^2\bm{M}^{-1}$ and hence $\bm{V} = \bm{R} + \sigma_v^2\bm{Z}\bm{M}^{-1}\bm{Z}^{T} + \bm{C}_{\beta}$. See that $\bm{M}'_{\rho} = 2\rho \bm{W}\bm{W}^T - (\bm{W}+\bm{W}^T)$ and $(\bm{M}^{-1})_{\rho}^{'} = \bm{M}^{-1}\bm{M}'_{\rho}\bm{M}^{-1}$.

Using this, we obtain the derivatives $\bm{V}'_{\sigma} = \bm{Z}\bm{M}^{-1}\bm{Z}^T$ and $\bm{V}'_{\rho}=-\sigma_v^2\bm{Z}(\bm{M}^{-1})_{\rho}^{'}\bm{Z}^T$. Hence
\begin{align*}
    \frac{\partial l}{\partial \hat{\sigma}_v^2} = -\frac{1}{2}\text{tr}(\bm{V}^{-1}\bm{V}'_{\sigma}) + \frac{1}{2}\bm{e}^T(\bm{V}^{-1}\bm{V}'_{\sigma}\bm{V}^{-1})\bm{e}
\end{align*}
and
\begin{align*}
    \frac{\partial l}{\partial \hat{\rho}} &= -\frac{1}{2}\text{tr}(\bm{V}^{-1}\bm{V}'_{\rho}) + \frac{1}{2}\bm{e}^T(\bm{V}^{-1}\bm{V}'_{\rho}\bm{V}^{-1})\bm{e}.
\end{align*}
Note that this system (including the derivative w.r.t. $\bm{\beta}$) cannot be solved directly and the ML estimators are not available in closed form, since all parameters are contained in $\bm{V}$ and the derivative equation~\ref{eq:derivative_beta} is not linear in $\bm{\beta}$. We therefore proceed by outlining a parameter estimation mechanism.

\subsection{Parameter estimation}
\label{sec:par_est}

The aim of parameter estimation is to obtain $\bm{\theta}$, the parameter vector. This can be found by solving the equation $\nabla_{\theta}\ell(\bm{\theta}) = 0$, after which we obtain an estimate $\hat{\bm{\theta}}$. To do this, we use Fisher scoring \citep{Longford87}. For multivariate normal distributions with parameter vector $\bm{\theta} = (\theta_1, \dots, \theta_P, \theta_{P+1}, \theta_{P+2})$ where $\theta_{P+1} = \sigma_v^2$ and $\theta_{P+2} = \rho$, observed quantity vector $\bm{y} = (y_1, \dots, y_D)$ and mean $\hat{\bm{X}}\bm{\beta}$, the Fisher information is given by
\begin{align}
    \mathcal{I}(\bm{\theta})_{m,n} = \frac{\partial(\hat{\bm{X}}\bm{\beta}
    )^T}{\partial\bm{\theta}_m} \bm{V}^{-1} \frac{\partial(\hat{\bm{X}}\bm{\beta}
    )}{\partial\bm{\theta}_n} + \frac{1}{2}\text{tr}\left(\bm{V}^{-1}(\bm{V})'_{\theta_m}\bm{V}^{-1}(\bm{V})'_{\theta_n}\right)
    \label{eq:fish_dnf}
\end{align}
where $m,n = 1 \dots, P+2$. In the standard case without measurement error correction, the covariance matrix $\bm{V}$ does not depend on $\bm{\beta}$, meaning that there is no contribution from the trace term for the $\bm{\beta}$ submatrix and off-diagonal ``cross terms" between $\bm{\beta}$ and the traditional variance parameters $\sigma_v^2$ and $\rho$ are 0. In other words, $ \mathcal{I}(\bm{\theta}) = \text{diag}( \mathcal{I}(\bm{\beta}), \mathcal{I}(\sigma_v^2), \mathcal{I}(\rho))$, with slight abuse of notation. For this model, this is not the case, since $\bm{\beta}$ is also present in the covariance matrix. We write 
\begin{align*}
    \mathcal{I}(\bm{\theta}) = \begin{bmatrix}
        \mathcal{I}(\bm{\beta}) & \begin{matrix}
            \mathcal{I}(\bm{\beta}, \sigma_v^2) & \mathcal{I}(\bm{\beta, \rho})
        \end{matrix} \\
        \begin{matrix}
            \mathcal{I}(\sigma_v^2, \bm{\beta}) \\ \mathcal{I}(\rho, \bm{\beta})
        \end{matrix} & \mathcal{I}(\sigma_v^2, \rho)
    \end{bmatrix}
\end{align*}
where $\mathcal{I}(\sigma_v^2, \rho)$, the submatrix involving $\sigma_v^2$ and $\rho$, becomes  
\begin{align*}
\mathcal{I}(\sigma_v^2, \rho) = \begin{bmatrix}
    \rfrac{1}{2}\text{tr}(\bm{V}^{-1}\bm{V}'_{\sigma}\bm{V}^{-1}\bm{V}'_{\sigma}) & \rfrac{1}{2}\text{tr}(\bm{V}^{-1}\bm{V}'_{\sigma}\bm{V}^{-1}\bm{V}'_{\rho}) \\
    \rfrac{1}{2}\text{tr}(\bm{V}^{-1}\bm{V}'_{\rho}\bm{V}^{-1}\bm{V}'_{\sigma}) & \rfrac{1}{2}\text{tr}(\bm{V}^{-1}\bm{V}'_{\rho}\bm{V}^{-1}\bm{V}'_{\rho})
\end{bmatrix}
\end{align*}
by equation~\ref{eq:fish_dnf}, recalling $\bm{V}'_{\sigma}$ and $\bm{V}'_{\rho}$ from Section~\ref{sec:loglik_deriv}. 

Under the assumptions present in this work, the $\bm{\beta}$-submatrix of the Fisher information matrix can be obtained by substituting $(\bm{V})'_{\beta} = 2\text{diag}_{1 \leq d\leq D}(\bm{C}_d\bm{\beta})$ where appropriate, and is
\begin{align*}
    \mathcal{I}(\bm{\beta}) = \hat{\bm{X}}^T\bm{V}^{-1}\bm{e} + 2\sum_{d=1}^D(\bm{V}^{-1})_{dd}(\bm{C}_d\bm{\beta})(\bm{C}_d\bm{\beta})^T.
\end{align*}
For the cross terms, for $p = P+1, P+2$,
\begin{align*}
    \mathcal{I}(\bm{\beta}, \theta_p) &= \frac{1}{2}\text{tr}\left(\bm{V}^{-1}(\bm{V})'_{\beta}\bm{V}^{-1}(\bm{V})'_{\theta_p}\right) = \sum_{d=1}^D\left(\bm{V}^{-1}(\bm{V})'_{\theta_p}\bm{V}^{-1}\right)_{dd}(\bm{C}_d\bm{\beta})
\end{align*}
and
\begin{align*}
    \mathcal{I}(\theta_p, \bm{\beta}) = \frac{1}{2}\text{tr}\left(\bm{V}^{-1}(\bm{V})'_{\theta_p}\bm{V}^{-1}((\bm{V})'_{\beta})^T\right) = \sum_{d=1}^D\left(\bm{V}^{-1}(\bm{V})'_{\theta_p}\bm{V}^{-1}\right)_{dd}(\bm{C}_d\bm{\beta})^T.
\end{align*}
We now describe the Fisher scoring algorithm. Let $\bm{\theta}_0$ denote an initial guess of $\bm{\theta}$. This guess is updated recursively via
\begin{equation}
    \bm{\theta}_{k+1} = \bm{\theta}_{k} + [\mathcal{I}(\bm{\theta}_k)]^{-1}\nabla\ell(\bm{\theta}_k).
    \label{eq:newton}
\end{equation} 
Let $\epsilon$ be some small value and $||\cdot||_2$ the $L_2$ or Euclidean norm. When a sufficiently large $K$ has been reached such that $||\bm{\theta}_{K} - \bm{\theta}_{K-1}||_2 < \epsilon$, we assume that the Newton gradient ascent algorithm has converged and hence $\bm{\theta} \approx \bm{\theta}_K$.  

We now briefly discuss a potential initial guess $\bm{\theta}_0 = (\bm{\beta}_0, (\sigma_v^2)_0, \rho_0)$. The initial guesses for the scalar parameters are application-dependent. Based on some arbitrary choice of $(\sigma_v^2)_0$ and $\rho_0$, we then evaluate 
\[\bm{G}_0((\sigma_v^2)_0, \rho_0) = (\sigma_v^2)_0((\bm{I} - \rho_0\bm{W})(\bm{I} - \rho_0\bm{W}^T))^{-1}, \qquad \tilde{\bm{V}}_0 = \bm{R} + \bm{Z}\bm{G}_0\bm{Z}^T, \]
corresponding to the standard spatial Fay-Herriot case without measurement error \citep{Petrucci06, Rao15}. For $\bm{\beta}_0$ we choose the GLS estimator
\[\bm{\beta}_0 = (\hat{\bm{X}}^T\tilde{\bm{V}}_0^{-1}\hat{\bm{X}})^{-1}\hat{\bm{X}}^T\tilde{\bm{V}}_0^{-1}\Hat{\bm{y}}.\]

\subsection{Mean Squared Error via Parametric Bootstrap}\label{sec:bootstrap}
The mean squared error (MSE) of the EBLUP (see equation \ref{eq:spatial_EBLUPYL}) is given by
\begin{equation}
    \text{MSE}(\Tilde{\bm{y}}) = \mathbb{E}[(\Tilde{\bm{y}} - \hat{\bm{y}})^2].
    \label{eq:mse_definition}
\end{equation}
Approaches such as those presented by \citet{Kackar84, Jiang01, Prasad90} and summarised in Chapter 5.6 of \citet{Rao15}, expand this expectation and derive explicit terms for the MSE. However, these approaches rely on the fact that $\hat{\bm{\beta}}$ is available in closed form as the best linear unbiased estimate (BLUE). While this is the case in the standard spatial EBLUP model \citep{Petrucci06}, we do not use the GLS estimator, as shown in Section~\ref{sec:par_est}. Therefore, we will use a parametric bootstrap approach to directly estimate the expectation in equation~\ref{eq:mse_definition}. This approach was introduced in the Fay-Herriot literature by \citet{Gonzlez-Manteiga08} and is explicitly implemented in the literature in \citet{Moretti20} for a multivariate EBLUP approach.

\begin{algorithm}
\label{algo:par_bootstrap}
Assume that all model assumptions are as before. The following parametric bootstrap procedure allows an estimate for the MSE to be obtained:
    \begin{itemize}
        \item Estimate $\hat{\bm{\theta}} = (\hat{\bm{\beta}}, \hat{\sigma}_v^2, \hat{\rho})$ via equation~\ref{eq:newton}. This requires the covariate data $\hat{\bm{X}}$, the direct estimates $\hat{\bm{y}}$, an initial guess $\bm{\theta}_0$ for the parameter vector and the covariance matrices $\bm{V}$ and $\bm{G}$, which in turn require $\sigma_s^2$, $\bm{W}$ and $\bm{Z}$, all of which are assumed to be known. 
        \item For $b = 1, \dots, B$, do the following:
        \begin{enumerate}
            \item  Write $\hat{\bm{G}} = \hat{\bm{G}}(\hat{\sigma}_v^2, \hat{\rho})$ for the plug-in estimator of $\bm{G}$ using the estimated parameters. Generate bootstrap random effects $\bm{v}_b^* \sim \mathcal{N}(0, \hat{\bm{G}})$, where \newline $\bm{v}_b^* = (v_{1b}^*, \dots, v_{Db}^*)^T$, independently for the small areas $1, \dots, D$. 
            \item Calculate the bootstrap ``true" means $\bm{y}_b^* = \hat{\bm{X}}\hat{\bm{\beta}} + \bm{Z}\bm{v}_b^*$.
            \item Generate bootstrap sampling error $\bm{s}_b^* \sim \mathcal{N}(0, \bm{R})$, where \newline $\bm{s}_b^* = (s_{1b}^*, \dots, s_{Db}^*)^T$, independently for the small areas $1, \dots, D$. Evaluate the bootstrap direct estimate $\Hat{\bm{y}}_b^* = \hat{\bm{X}}\hat{\bm{\beta}} + \bm{Z}\bm{v}_b^* + \bm{s}_b^*$.
            \item Calculate $\tilde{\bm{y}}_b^*$ by substituting $\Hat{\bm{y}}_b^*$ for $\Hat{\bm{y}}$ in equation~\ref{eq:spatial_EBLUPYL}. 
        \end{enumerate}
    \end{itemize}
\end{algorithm}
Using Algorithm~\ref{algo:par_bootstrap}, we estimate the expectation in equation~\ref{eq:mse_definition} by a weighted sum, and the estimated area-specific MSE becomes
\begin{align*}
    \hat{\text{MSE}}(\tilde{\bm{y}}^*) = \frac{1}{B}\sum_{b=1}^B (\tilde{\bm{y}}_b^* - \hat{\bm{y}}_b^*) \odot (\tilde{\bm{y}}_b^* - \hat{\bm{y}}_b^*)
\end{align*}
where $\odot$ is the Hadamard (elementwise) product.

\section{Simulation Study}\label{sec:simulation}

A simulation study is designed to evaluated the performances of our proposed model in comparison to competing models in the literature. 

We compare the following EBLUPs under four competing models:

\begin{enumerate}
    \item Traditional Fay-Herriot model (FH) \citep{FayHerriot79} and its EBLUP. This is denoted by EBLUP in this section.
    \item Spatial Fay-Herriot with SAR(1) process (SFH) \citep{pratesi2008small} and with EBLUP denoted by SEBLUP.
    \item Ybarra-Lohr Fay-Herriot model (FH-YL) \citep{Ybarra08} and with EBLUP denoted by EBLUP-YL.
    \item Our proposal, the spatial Ybarra-Lohr Fay-Herriot with EBLUP denoted by SEBLUP-ME.
\end{enumerate}

Let $D=100$ denote the number of small areas. The steps of the simulation are as follows:

\begin{enumerate}
    \item Generate  $x_{d1} \sim \mathcal{N}(5, \sqrt{\sigma^{2}_{d1}})$, and $x_{d2} \sim \mathcal{N}(4.5,  \sqrt{\sigma^{2}_{d2}})$   with $\sigma^{2}_{d1} \sim \mathcal{N}(0.5, 0.1)$ and $\sigma^{2}_{d2} \sim \mathcal{N}(0.3, 0.05)$, hence the vector of observed area-level covariates (including the intercept) is $\mathbf{x}_{d}=[1\,\,\, x_{d1}\,\,\,x_{d2}]$. The regression coefficients are $\bm{\beta} =
[1.5\,\,\,1.5\,\,\,0.3]$. Set $\bm{Z} = \bm{I}$.

    \item Repeat $R=500$ times $(r=1,...,500)$:

    \begin{enumerate}[label=\theenumi.\arabic*]
        \item Generate $v^{(r)} \sim \mathcal{N}(0,\bm{G})$  where $ v^{(r)} $ follows a SAR(1) process with covariance matrix $\bm{G} = \sigma_v^2((\bm{I} - \rho\bm{W})(\bm{I} - \rho\bm{W}^T))^{-1}$, with $\bm{W}$ a known weight matrix as before. We set $\sigma^{2}_v = 0.2$ and $\rho = 0.25$. For the design error, $s^{(r)}_{d} \sim \text{Unif}(0.02, 0.22) $ and hence
        $\hat{y}^{(r)}_{d} = \bm{x}_{d}\bm{\beta} + v^{(r)}_{d} + s^{(r)}_{d}$. The true parameter (population mean) is $y^{(r)}_{d}=\bm{x}_{d}\bm{\beta} + v^{(r)}_{d}$.
        In doing so, we assess three measurement error scenarios:

        \begin{itemize}
            \item $x_{d1}$ is correctly observed (generated in step 1 and kept fixed over the iterations)
            \item $x_{d1}$ is affected by classical measurement error: $x^{(r)}_{d1} = x_{d1} + \mathcal{N}(0, 0.1)$
            \item $x_{d1}$ is affected by systematic error: $x^{(r)}_{d1} = x_{d1} + \mathcal{N}(3, 0.1)$.
        \end{itemize}

        \item For every measurement error scenario, compute the EBLUP of $y^{(r)}_{d}$ under the different models for $d=1,...,D$.

    \end{enumerate}

\item  Compute the percentage relative root mean squared error (RRMSE) and the absolute relative bias (ARB) as performance measures, where $\hat{y}_d$ denotes an EBLUP, under a specific area-level model, for $y_d$:  

\begin{equation}
    \text{RRMSE} (\hat{y}_d)
=
\frac{
\sqrt{
\frac{1}{R}
\sum_{r=1}^R
(y^{(r)}_d-\hat{y}^{(r)}_d)^2
}
}{
\frac{1}{R}
\sum_{r=1}^R y^{(r)}_d
} \times 100,
\end{equation}
    
\end{enumerate}

\begin{equation}
\text{ARB} (\hat{y}_d)
=
\left|
\frac{
\frac{1}{R}
\sum_{r=1}^{R}
\left(\hat{y}^{(r)}_{d}-y^{(r)}_d\right)
}{
\frac{1}{R}  \sum_{r=1}^{R} y^{(r)}_d
}
\right| \times 100.
\end{equation}

These are common measures in model-based simulation studies in SAE (see, e.g., \cite{berg2023empirical}). In the simulation study section results, we also present descriptive statistics of those across the small areas.

\subsection{Simulation Results}

\subsubsection*{No measurement error in the covariates}

We first consider the baseline scenario in which the auxiliary variables are observed without measurement error. This setting serves as a benchmark, allowing us to check that the proposed model does not lead to a loss in efficiency when the additional measurement error correction is unnecessary. We present in Figures \ref{RRMSE_noME} and \ref{ARB_noME} the boxplots of the RRMSE and ARB, respectively, of the small area estimates computed under the different models, when the $(x_{11}, \dots, x_{D1})$ are correctly observed (no measurement error). In Figures \ref{RRMSE_noME_ord} and \ref{ARB_noME_ord}, we show the RRMSE and ARB respectively across the small areas, ordered by growing sampling variance of the direct estimates.  

As expected, all four estimators exhibit very similar performance. The distributions of the RRMSE almost completely overlap, indicating that accounting for measurement error does not reduce efficiency when the auxiliary variables are correctly observed. Likewise, all EBLUB approaches display negligible bias, with median ARB values close to zero and only limited variability across the simulation replicates. The area-specific results further confirm that accuracy remains comparable irrespective of the modelling approach, even for areas characterised by relatively large sampling variances.

These findings are reassuring, as they demonstrate that the proposed SEBLUP-ME approach does not introduce unnecessary variability when the assumptions of the classical spatial Fay-Herriot model hold. In other words, the additional covariance structure incorporated in the proposed model does not  affect the properties of the small area estimates in the absence of measurement error.

\begin{figure}[H]
\centering
\includegraphics[width=0.55\textwidth]{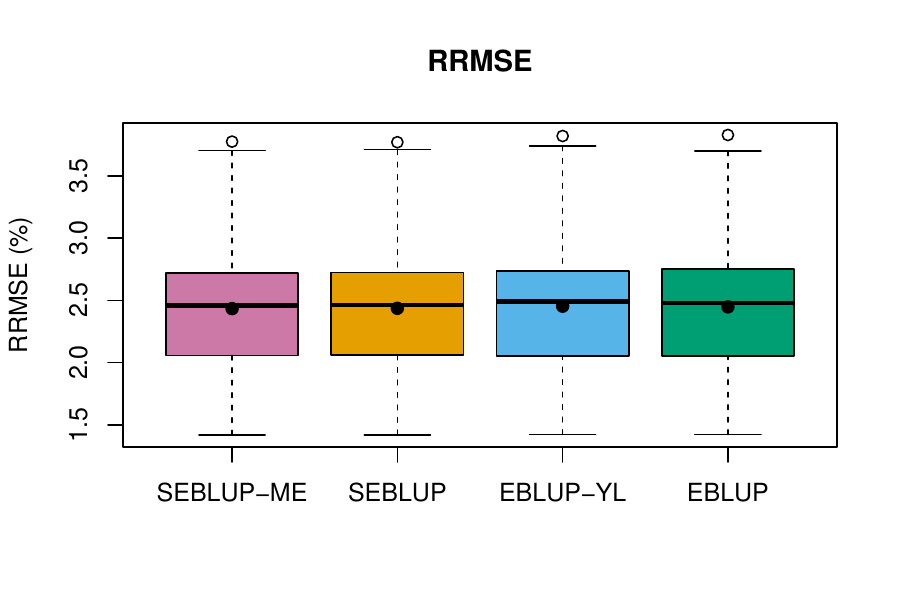}
\caption{RRMSE comparison of the four estimators - scenario with no measurement error. The dot denotes the mean across the areas.}
\label{RRMSE_noME}
\end{figure}

\begin{figure}[H]
\centering
\includegraphics[width=0.55\textwidth]{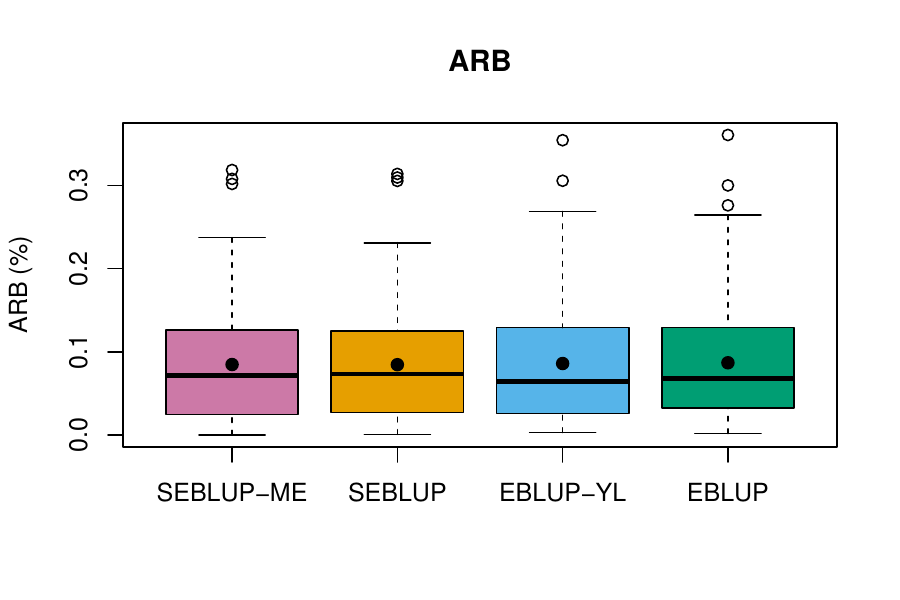}
\caption{ARB comparison of the four estimators - scenario with no measurement error. The dot denotes the mean across the areas.}
\label{ARB_noME}
\end{figure}

\begin{figure}[H]
\centering
\includegraphics[width=0.75\textwidth]{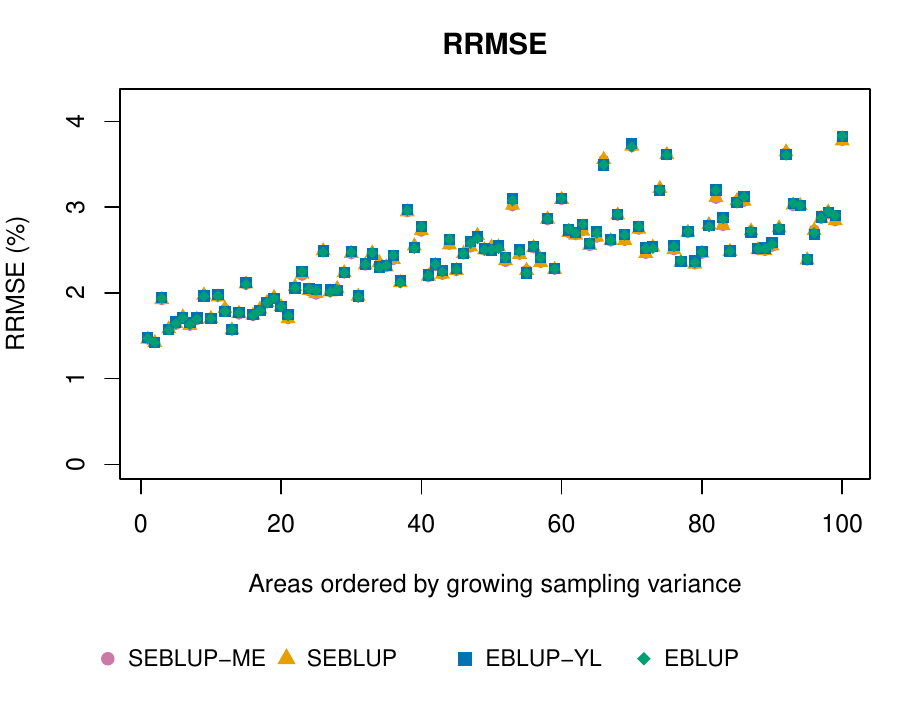}
\caption{RRMSE comparison across the areas (ordered by growing $\sigma^2_s$) of the four estimators - scenario with no measurement error. }
\label{RRMSE_noME_ord}
\end{figure}

\begin{figure}[H]
\centering
\includegraphics[width=0.75\textwidth]{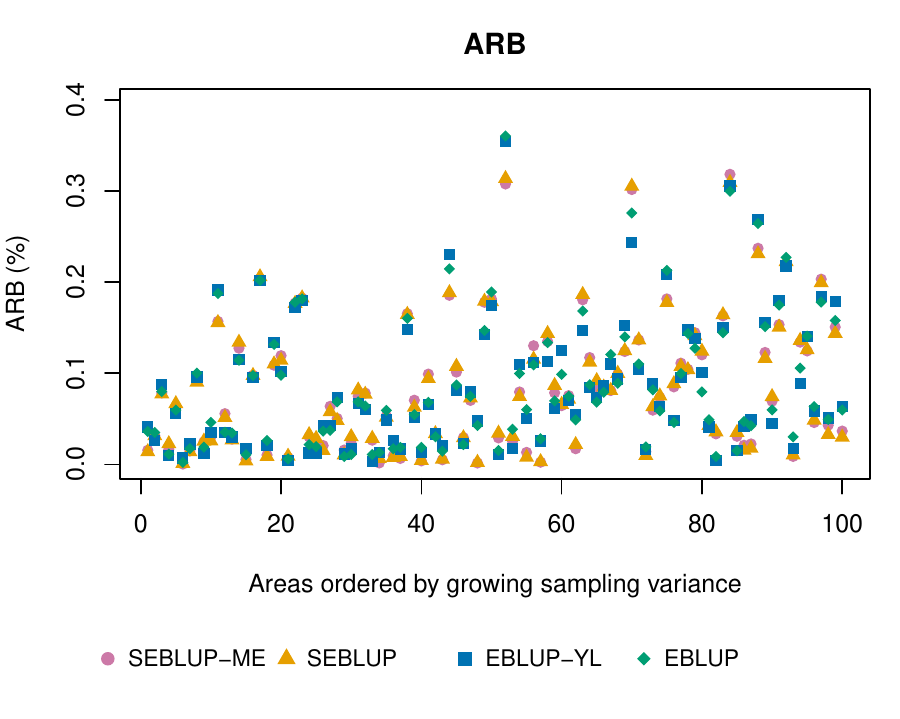}
\caption{ARB comparison across the areas (ordered by growing $\sigma^2_{s}$) of the four estimators - scenario with no measurement error. }
\label{ARB_noME_ord}
\end{figure}

\subsubsection*{Classical measurement error in one covariate}

The second scenario introduces classical measurement error in the first auxiliary variable, representing random inaccuracies that may arise when covariates are obtained from non-traditional data sources. These inaccuracies are assumed to have zero mean. In this setting, important differences emerge between the competing estimators. In Figures \ref{RRMSE_CME} and \ref{ARB_CME}, we depict the boxplots of the RRMSE and ARB, respectively, of the small area estimates computed under the different models when the first covariate is affected by classical measurement error.  In Figures \ref{RRMSE_CME_ord} and \ref{ARB_CME_ord}, we present the RRMSE and ARB across areas, ordered by growing sampling variance, as in the previous subsection. 

Ignoring measurement error leads to a noticeable worsening in  accuracy. Both the traditional Fay–Herriot model (EBLUP) and the spatial Fay–Herriot model (SEBLUP) exhibit larger RRMSE values than the models explicitly accounting for uncertainty in the auxiliary variable. Although the spatial model benefits from borrowing strength across neighbouring areas, the gain obtained through the spatial process alone is insufficient to compensate for the bias introduced by the covariate affected by measurement error. The Ybarra-Lohr model, which explicitly models classical measurement error but ignores spatial dependence, substantially improves upon both the classical and spatial Fay–Herriot estimators.

\begin{figure}[H]
\centering
\includegraphics[width=0.55\textwidth]{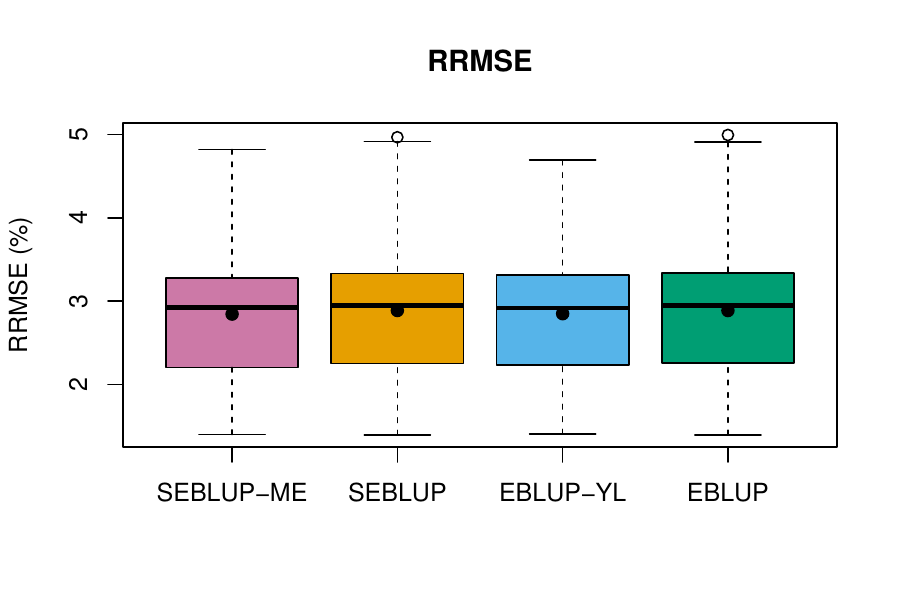}
\caption{RRMSE comparison of the four estimators - scenario with classical measurement error in $X_1$. The dot denotes the mean across the areas.}
\label{RRMSE_CME}
\end{figure}

\begin{figure}[H]
\centering
\includegraphics[width=0.55\textwidth]{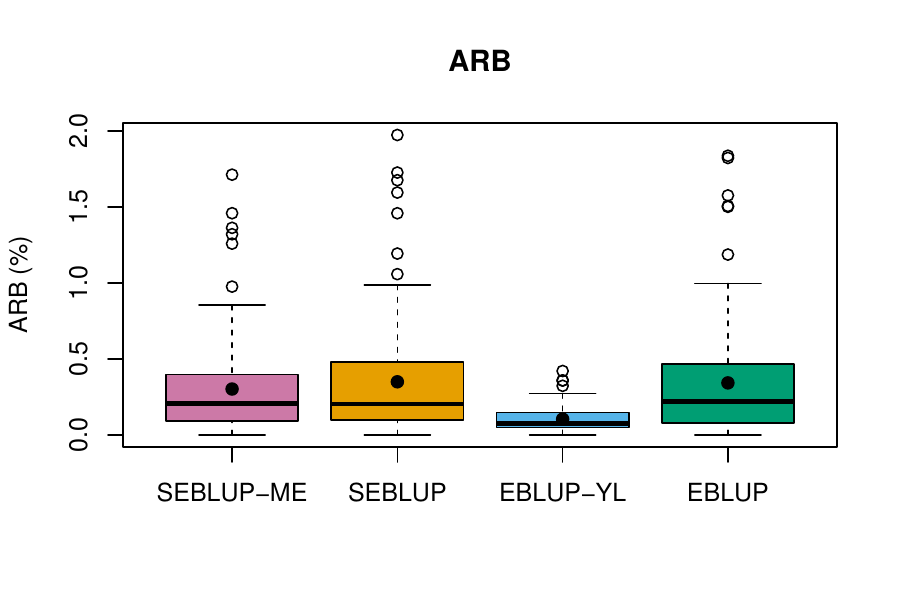}
\caption{ARB comparison of the four estimators - scenario with classical measurement error in $X_1$. The dot denotes the mean across the areas.}
\label{ARB_CME}
\end{figure}

\begin{figure}[H]
\centering
\includegraphics[width=0.75\textwidth]{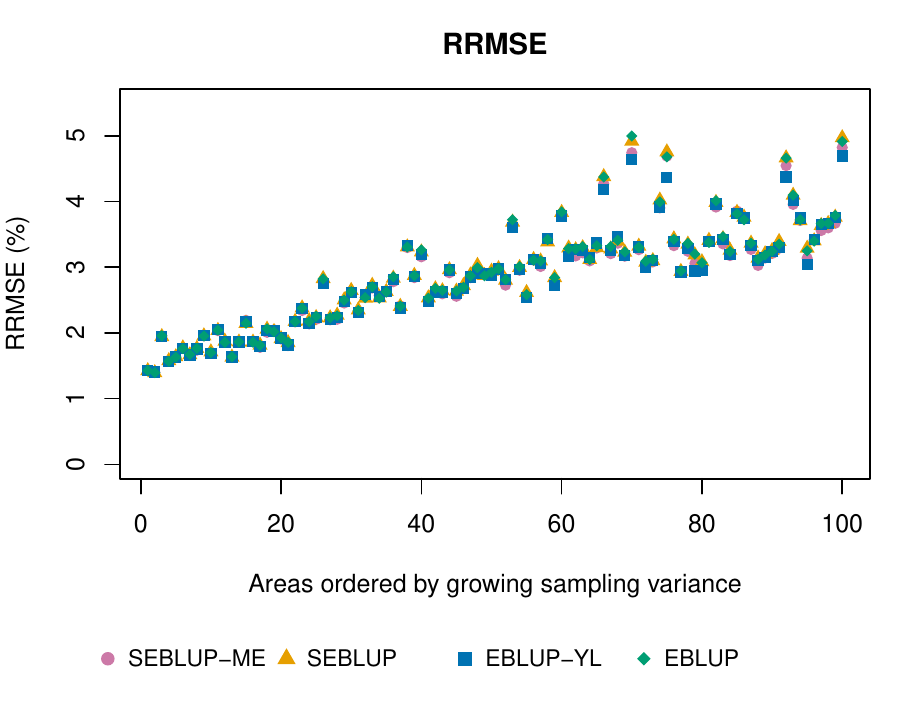}
\caption{RRMSE comparison across the areas (ordered by growing $\sigma^2_s$) of the four estimators - scenario with classical measurement error in $X_1$. }
\label{RRMSE_CME_ord}
\end{figure}

\begin{figure}[H]
\centering
\includegraphics[width=0.75\textwidth]{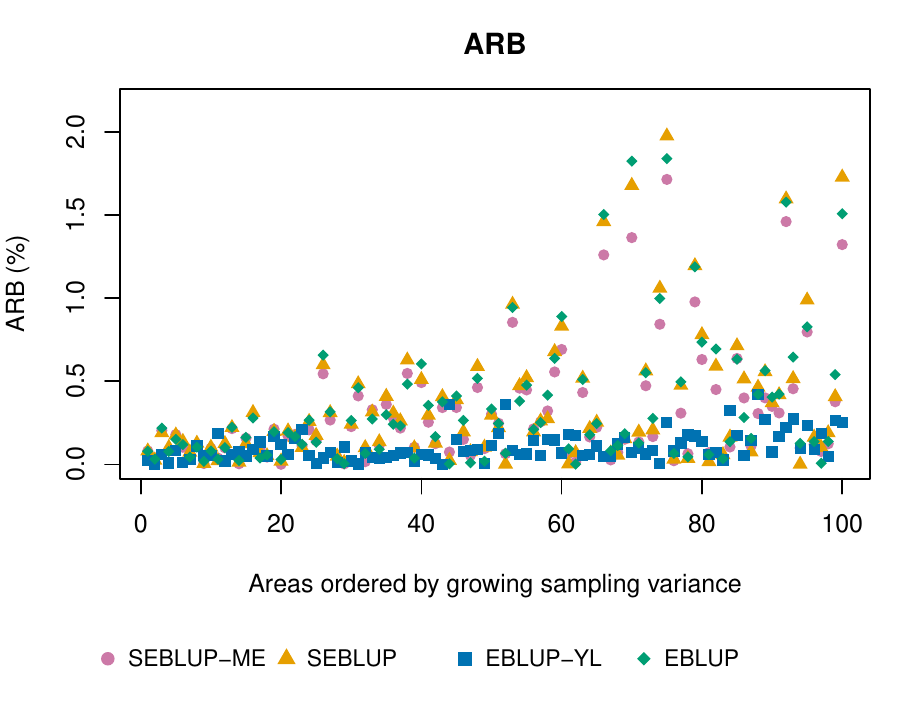}
\caption{ARB comparison across the areas (ordered by growing $\sigma^2_s$) of the four estimators - scenario with classical measurement error in $X_1$. }
\label{ARB_CME_ord}
\end{figure}

\subsubsection*{Systematic measurement error in the first covariate}

The third scenario represents a considerably more challenging situation by introducing systematic measurement error in the first covariate. Unlike classical measurement error, systematic error introduces a persistent (systematic) shift in the observed covariate, closely resembling the types of bias encountered when auxiliary variables are derived from non-probability data sources. In Figures \ref{RRMSE_SME} and \ref{ARB_SME}, we depict the boxplots of the RRMSE and ARB, respectively, of the small area estimates computed under the different models, when the first covariateis affected by systematic measurement error.  In Figures \ref{RRMSE_SME_ord} and \ref{ARB_SME_ord}, we present the RRMSE and ARB across areas, ordered by growing sampling variance, as previously done. 

Under this scenario, the shortcomings of models that ignore measurement errors become substantially more pronounced. The spatial Fay–Herriot model provides only partial improvement through spatial borrowing.  Here, the Ybarra–Lohr model produces estimates with larger ARB.  Across all performance measures, the proposed SEBLUP-ME estimator demonstrates the best behaviour. It produces the lowest ARB whilst returning small area estimates with an RRMSE smaller than common thresholds of Official Statistics for estimates releasing (see, e.g.,  \cite{spagnolo2018}). These results suggest that modelling spatial dependence and systematic measurement error simultaneously provides important practical benefits of our approach.

\begin{figure}[H]
\centering
\includegraphics[width=0.55\textwidth]{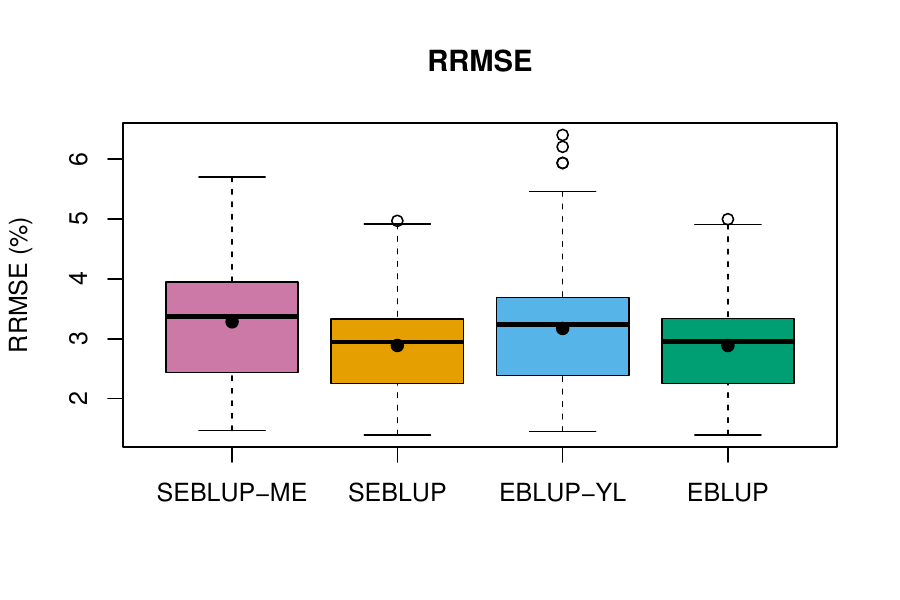}
\caption{RRMSE comparison of the four estimators - scenario with classical measurement error in $X_1$. The dot denotes the mean across the areas.}
\label{RRMSE_SME}
\end{figure}

\begin{figure}[H]
\centering
\includegraphics[width=0.55\textwidth]{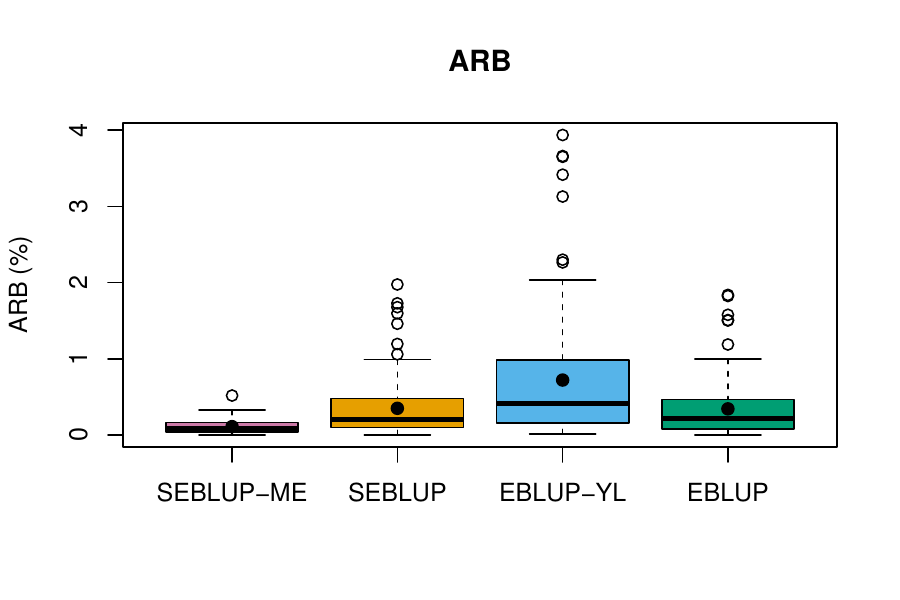}
\caption{ARB comparison of the four estimators - scenario with systematic measurement error in $X_1$. The dot denotes the mean across the areas.}
\label{ARB_SME}
\end{figure}

\begin{figure}[H]
\centering
\includegraphics[width=0.75\textwidth]{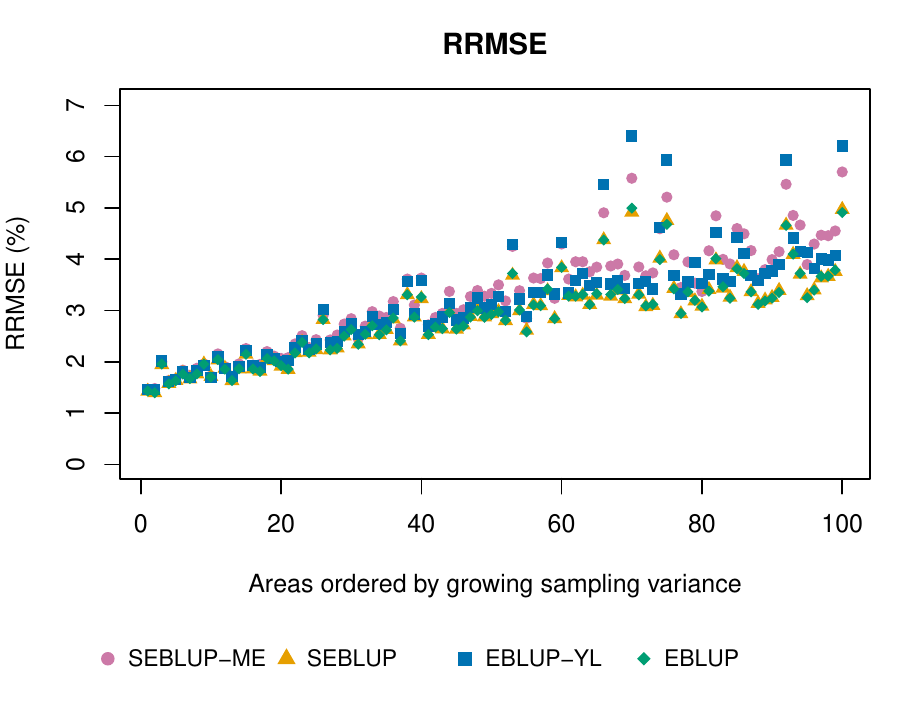}
\caption{RRMSE comparison across the areas (ordered by growing $\sigma^2_s$) of the four estimators - scenario with systematic measurement error in $X_1$. }
\label{RRMSE_SME_ord}
\end{figure}

\begin{figure}[H]
\centering
\includegraphics[width=0.75\textwidth]{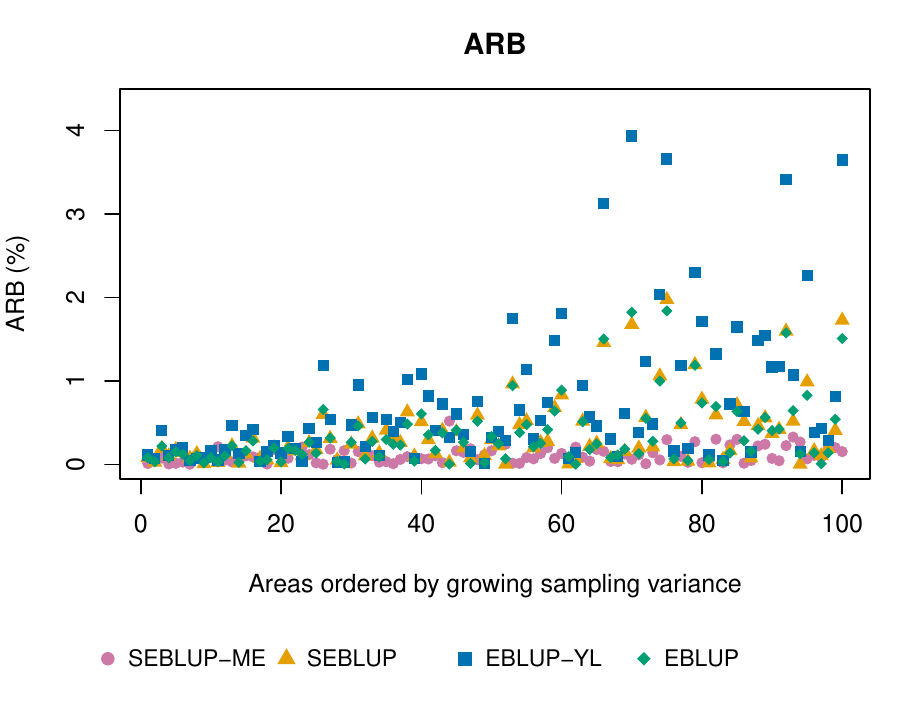}
\caption{ARB comparison across the areas (ordered by growing $\sigma^2_s$) of the four estimators - scenario with systematic measurement error in $X_1$. }
\label{ARB_SME_ord}
\end{figure}

\subsection{Final remarks on the simulation study}

The scenario involving systematic measurement error represents a deliberate departure from the assumptions underlying both the Ybarra--Lohr model and the proposed EBLUP (EBLUP-ME). Neither estimator is expected to fully eliminate the bias introduced by a systematic error in the covariates. Nevertheless, the proposed SEBLUP-ME consistently exhibits lower ARB than the Ybarra--Lohr approach. This improvement should not be interpreted as a correction for systematic measurement error per se, but rather as a consequence of incorporating spatial dependence into the prediction model. By borrowing strength from neighbouring areas through the introduction of a SAR(1) process in the area-level model, EBLUP-ME relies less heavily on the biased regression component on its own and exploits the spatial correlation present in the data generating process. This additional source of information partially compensates for the misspecification induced by the biased auxiliary variable, resulting in more stable predictions and a smaller average prediction bias across the areas. The gains are particularly evident for areas with larger sampling variances, where the contribution of the spatial random effects to the prediction is greatest.

\section{Application}\label{sec:application}  
\subsection{Data, variables, and small area problem}
In this section, we propose an application based on the case study in \cite{salvatore2024use}. 
Using data from Round 10 of the European Social Survey (ESS), we aim to examine worry about climate change in Spain at the NUTS2 regional level. The analysis includes 17 autonomous regions (excluding Ceuta and Melilla due to missing auxiliary data), with a total sample of 2,214 respondents. The ESS is not designed to produce reliable subnational estimates in Spain. The outcome of interest is the proportion of individuals who are very or extremely worried about climate change. Respondents rated their worries on a 5-point scale, which, following \cite{salvatore2024use}, was dichotomised into ``very/extremely worried'' (scores 4–5) versus ``other'' (scores 1–3) using a median split approach (median equal to 4). At the national level, 56.4\% of respondents were classified as highly worried about climate change. Direct regional estimates showed considerable variability, and many regions had coefficients of variation (CVs) exceeding 16\%, indicating that the estimates are unreliable and further justifying the use of SAE techniques \citep{marchetti2024social}.

We use web-scraped data from the platform Booking.com to capture regional sustainability characteristics. Specifically, we created variables showing the proportion of hospitality venues classified under the platform’s former Travel Sustainable program, which included four sustainability levels (Level 1, Level 2, Level 3, and Level 3+). Although the program was replaced in March 2024 by a single Sustainability Certification label, the data were collected before this change and therefore use the original classification system. The underlying assumption of the choice of these variables is that a higher proportion of environmentally certified venues reflects greater regional environmental awareness and sustainability practices, making these indicators potentially useful predictors of regional attitudes towards climate change in Spain \citep{salvatore2024use}. Data were collected through web scraping in R using the ``rvest'' package. For each Spanish region, information was gathered on the total number of venues and the proportion assigned to each sustainability level. To account for fluctuations in availability, data were collected for 84 randomly selected dates (approximately seven per month) between February 2024 and January 2025, and regional proportions were averaged across this period. Across all regions, the average proportions of hotels were 17.2\% for Level 1, 9.26\% for Level 2, 2.74\% for Level 3, and 2.11\% for Level 3+.

Given the promising findings reported in \cite{salvatore2024use}, we employ these non-traditional variables to evaluate and compare the performance of the models considered in the simulation study. The Booking.com variables are derived from web-scraped data and are therefore subject to measurement error due to temporal variation in listings and the data collection process itself. Furthermore, climate change attitudes are likely to display spatial dependence across neighbouring regions. These arguments motivates the use of a spatial Fay--Herriot model with measurement error. 

\subsection{Mapping worry about climate change at NUTS-2 level in Spain}

The small area estimates are compared via the RRMSE in Figure \ref{fig:rrmse_app} (we refer to Sections \ref{sec:FH} and \ref{Sec:spatialFH} for the MSE estimators), and coefficient of variation (CV) for the direct estimates, as is standard practice in SAE \citep{Rao15}. Furthermore, as a bias diagnostic, we present the ratios between the EBLUPs and the direct estimates in Figure \ref{fig:ratios_app}, given that the direct estimates are design-unbiased. 
We find that the estimated MSE for the new measurement error SEBLUP model is higher than the estimated MSE of the other models, though it is within the 16.5\% threshold. A possible reason for this is that the other models underestimate the MSE due to a lack of awareness of errors in the covariates. In Figure~\ref{fig:ratios_app}, we see that the prediction aligns more closely with the direct estimates than the other models do. The SEBLUP-ME model seems to assign more weight to the direct estimates when covariate data are unreliable.

 We present in Figure \ref{fig:map}, the regional estimates for Spain at NUTS2 of the worry about climate change indicator via the SEBLUP-ME approach. The estimates range from approximately 0.43 to 0.64, indicating substantial geographic variation in worry about climate change. A clear spatial pattern emerges. The highest estimated levels of worries (dark purple regions) are observed in several peripheral regions, particularly in the north-west, north-east, and parts of the south, where values exceed 0.58. In contrast, the lowest levels of concern (yellow regions) are concentrated in the central part of Spain, with estimates around 0.44–0.47. Most regions fall within an intermediate range (orange to pink shades), suggesting moderate worry about climate change. The Canary Islands also appear to exhibit relatively high levels of concern compared with the national average.

Overall, the results indicate that worry about climate change are not uniformly distributed across Spain, with residents in peripheral regions generally expressing greater worry than those in the central regions. The difference between the highest and lowest regional estimates is approximately 0.21 points, highlighting meaningful regional heterogeneity in climate change concern.

\begin{figure}[H]
    \centering
    \includegraphics[width=.75\textwidth]{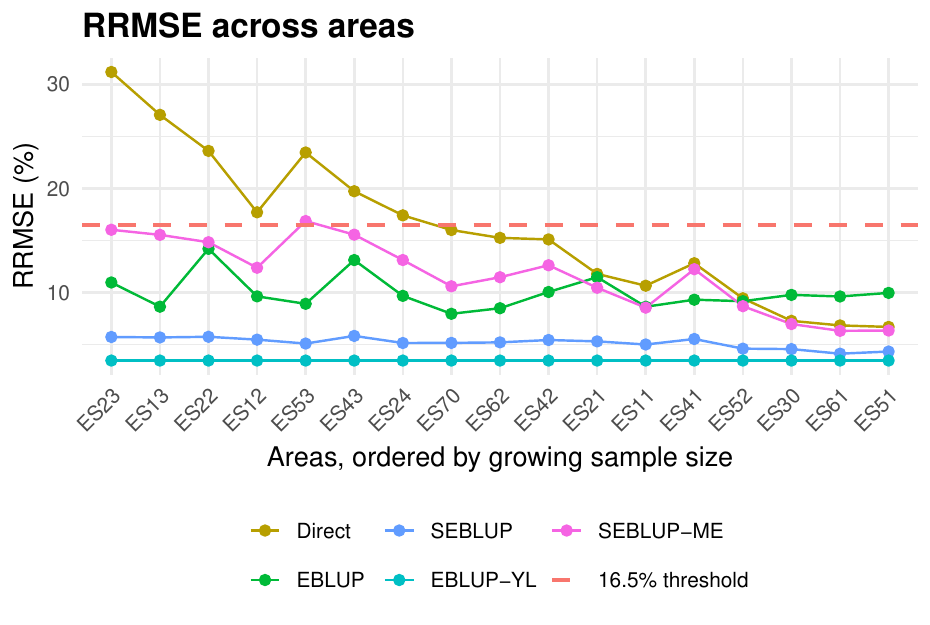}
    \caption{Estimated RRMSE of the different EBLUPs (and CV of the direct estimates), ordered by increasing population. These are estimated via parametric bootstrap.}
    \label{fig:rrmse_app}
\end{figure}

\begin{figure}[H]
    \centering
    \includegraphics[width=.75\textwidth]{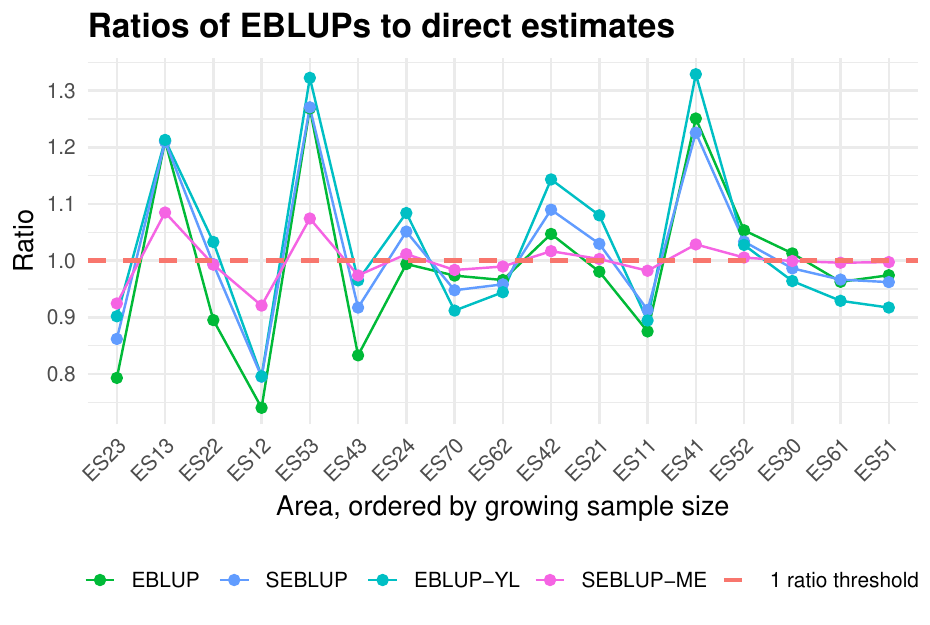}
    \caption{Ratios between the different EBLUPs and the direct estimates, ordered by growing sample size.}
    \label{fig:ratios_app}
\end{figure}

\begin{figure}[H]
    \centering
    \includegraphics[width=.8\textwidth]{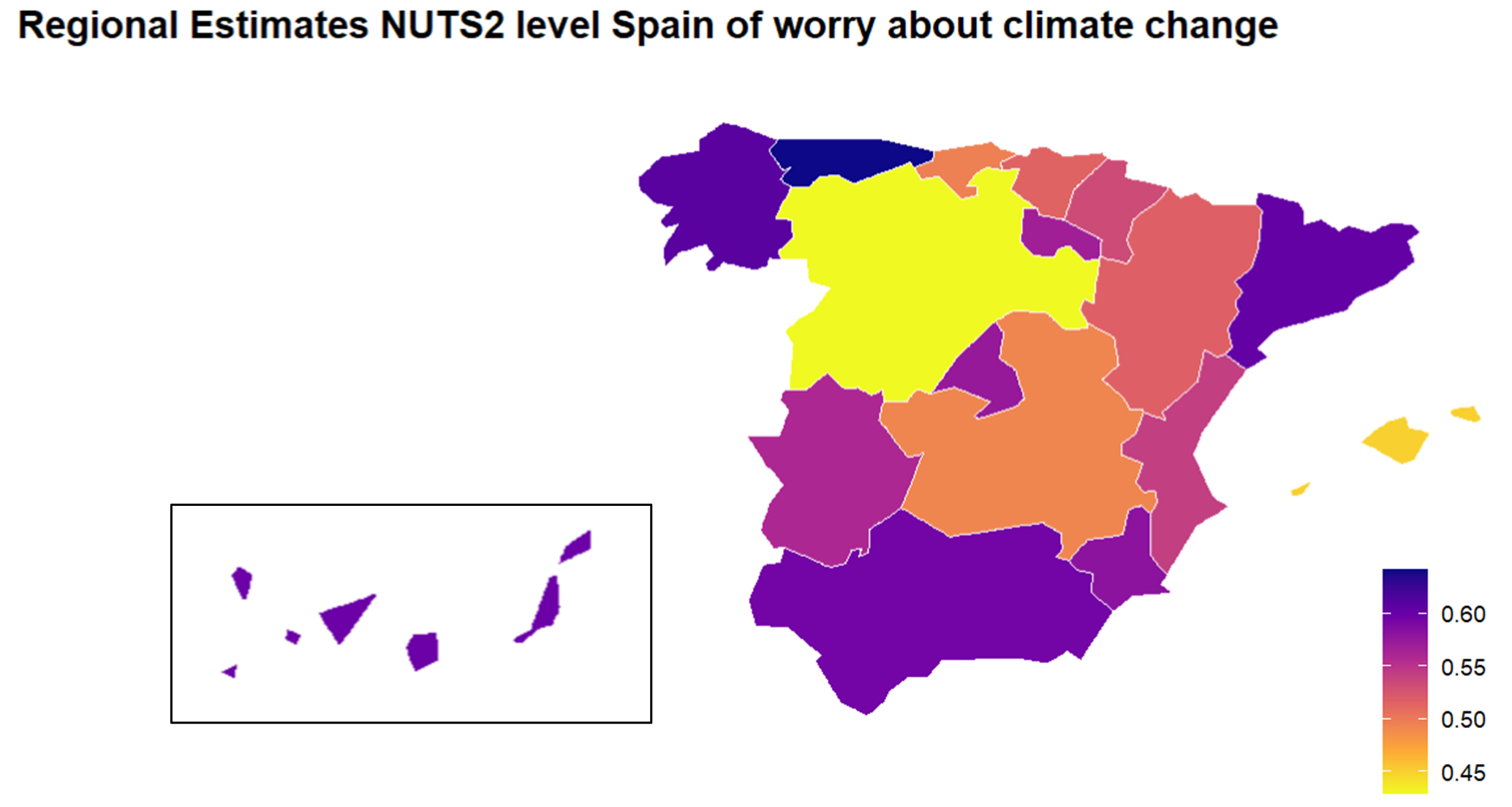}
    \caption{Regional SEBLUP-ME estimates at NUTS2 level for Spain of worry about climate change indicator. Darker colours denote more worry about climate change.}
    \label{fig:map}
\end{figure}

\section{Conclusion}\label{sec:conclusions}

This article introduced a spatial Fay--Herriot model with measurement error in the covariates, extending the framework of \citet{Ybarra08} to accommodate spatial dependence between areas via a SAR(1) process. We derived the EBLUP under this model, established the associated log-likelihood and score equations, proposed a Fisher scoring algorithm for parameter estimation, and outlined a parametric bootstrap procedure for MSE estimation. Together, these results provide practitioners with an approach to handle complex modelling situations that often arise when auxiliary variables are derived from Big Data.

Our simulation study shows that the proposed SEBLUP-ME behaves as intended across a range of measurement error scenarios. When covariates are observed without error, the additional covariance structure introduces no meaningful loss of efficiency relative to the classical spatial EBLUP, confirming that the correction is not costly when unneeded. Under classical (zero-mean) measurement error, the correction proposed by \citet{Ybarra08} is the dominant source of improvement, and our spatial extension performs comparably to the non-spatial Ybarra--Lohr approach. The results are most notable in the systematic (non-zero-mean) measurement error scenario, a scenario that closely mirrors the kind of bias typically introduced by non-probability, Big Data-derived covariates. Here, the proposed spatial Ybarra--Lohr extension tends to outperform the classical Ybarra--Lohr estimator, because systematic distortion in the auxiliary variables reduces the reliability of the synthetic predictions. Under non-classical measurement error, i.e., $\hat{x}_{di} = x_{di} + u_d, \mathbb{E}[u_d] \neq 0$ for covariate $i = 1, \dots, P$ and small area $d = 1, \dots, D$, the observed auxiliary information is not only noisy but also systematically shifted away from the true covariate values, so that the regression component may become substantially biased for individual areas. The classical Ybarra--Lohr estimator corrects for measurement error only through the additional variance component $\bm{\beta}^T \bm{C}_d \bm{\beta}$, which accounts for the extra uncertainty induced by contaminated covariates, but the model still assumes independent area random effects and therefore relies primarily on local auxiliary information for prediction. The estimator proposed here additionally incorporates a SAR(1) spatial process in the area random effects, inducing spatial dependence across neighbouring areas, so that estimation is no longer driven solely by the contaminated covariates of a single area but also by information borrowed from surrounding areas through the spatial covariance structure. When the auxiliary variables are systematically distorted, as is plausible with Big Data sources, neighbouring areas may still contain useful information regarding the underlying latent process; the spatial component therefore stabilises estimation and mitigates the effect of biased auxiliary information, yielding lower bias than the classical Ybarra--Lohr estimator in the presence of non-zero-mean measurement error.

The application to European Social Survey data on climate change worry in Spain, using web-scraped Booking.com sustainability indicators as auxiliary information, illustrates the practical relevance of these findings. The SEBLUP-ME estimates achieved RRMSE values below common Official Statistics thresholds for essentially all regions, including several with small sample sizes and correspondingly unreliable direct estimates, while the ratio diagnostics indicated no systematic departure from the design-unbiased direct estimates. 

Several limitations point to directions for future work. First, we assumed that the measurement error covariance matrices $\bm{C}_d$ are known; in practice these must often be estimated from the Big Data source itself, and propagating this additional layer of uncertainty into the EBLUP and its MSE remains an open problem. Secondly, we restricted attention to a SAR(1) specification for spatial dependence; extensions to more general spatial processes, temporal dynamics, or multivariate outcomes would broaden the applicability of the model. Finally, finding analytical approximations for the MSE of the proposed EBLUP remains interesting from a theoretical perspective, but is beyond the scope of this article. Future research will study this problem. Overall, our model yields good performance and does not worsen the performance of competing methods under any of the scenarios considered, with biases remaining negligible across all areas. In practice, we recommend that users begin with the simplest model that adequately represents the data and extend it only when there is evidence that a more complex specification is needed.

\subsubsection*{Data availability statement}

The data used in this study are publicly available from the European Social Survey (ESS) Round 10. The dataset can be accessed, subject to registration, through the European Social Survey data portal (DOI: 10.21338/ESS10E03\_2; previously archived under DOI: 10.21338/NSD-ESS10-2020).

\subsubsection*{Funding statement}
This work was supported by the Dutch Research Council (NWO) under grant number 406.XS.25.01.042.

\subsubsection*{Conflict of interest statement} The authors declare that they have no conflicts of interest relevant to this work.


\renewcommand*{\bibfont}{\footnotesize}
\bibliographystyle{agsm} 

\bibliography{references}

\end{document}